\definecolor{DarkGreen}{RGB}{10,50,2}
\newcommand{\new}[1]{#1}
\newcommand{\newnew}[1]{#1}
\newcommand{\op}{\mathrm{opt}}
\newcommand{\mx}{\mathrm{max}}
\pretocmd{\@cite}{\def\@BBOP{[}\def\@BBCP{]}}{}{}
\theoremstyle{plain}
\newtheorem{theorem}{Theorem}[section]
\newtheorem{corollary}[theorem]{Corollary}
\newtheorem{proposition}[theorem]{Proposition}
\theoremstyle{definition}
\theoremstyle{remark}
\newtheorem{remark}{Remark}
\begin{document}


\title{
Conditional normative reasoning as a fragment of HOL 
}
\newcommand{\xav}[1]{{\color{blue}XAV #1 }}
\author{
\name{Xavier Parent\textsuperscript{a}\thanks{CONTACT X. Parent. Email: xavier.parent@tuwien.ac.at; C. Benzmüller: christoph.benzmueller@uni-bamberg.de. This article will be published in the Journal of Applied Non-Classical Logics, 2024.} and Christoph Benzmüller\textsuperscript{b} 
\textsuperscript{c}
}
\affil{\textsuperscript{a}
Technische Universität Wien, Favoritenstrasse 9,
A-1040 Wien, Austria
\textsuperscript{b} Universität Bamberg, An der Weberei 5, 96047 Bamberg, Germany
\textsuperscript{c} Freie Universität Berlin, Arnimallee 7, 14195 Berlin, Germany
}
}
\maketitle

\begin{abstract} 
We report on the mechanization of (preference-based) conditional normative reasoning. Our focus is on \AA qvist's system {\bf E} for conditional obligation, \new{and its extensions}. Our mechanization is achieved via a shallow semantical embedding in Isabelle/HOL.  We consider two possible uses of the framework. The first one is as a tool for meta-reasoning about the considered logic. We employ it for the 
automated verification of deontic correspondences  (broadly conceived) and related matters, analogous to what has been previously achieved for the modal logic cube.  \newnew{The equivalence is automatically verified in one direction, leading from the property to the axiom.} The second use is as a tool for assessing ethical arguments. We provide a computer encoding of a well-known paradox \newnew{(or impossibility theorem)} in population ethics, Parfit's repugnant conclusion. 
\newnew{While some have proposed overcoming the impossibility theorem by abandoning the presupposed transitivity of ``better than,"  our formalisation unveils a less extreme approach, suggesting among other things the option of weakening transitivity suitably rather than discarding it entirely.}
Whether the presented encoding increases or decreases the attractiveness and persuasiveness of the repugnant conclusion is a question we would like to pass on to philosophy and ethics.


\end{abstract}

\begin{keywords}
Conditional obligation; Isabelle/HOL; correspondence; automated theorem proving; population ethics; mere addition/repugnant conclusion paradox; transitivity of betterness; weakenings of transitivity.
\end{keywords}


\section{Introduction} \label{intro}

We report on the mechanization of (preference-based) conditional  normative reasoning. Our focus is on \AA qvist's system {\bf E} for conditional obligation, \new{and its extensions}. Our mechanization is achieved via a shallow semantical embedding in Isabelle/HOL adapting the methods used by \cite{C47}.
To look at Standard Deontic Logic (SDL) and extensions \citep{ddl:C80,PvTT21} would not be very interesting. First, no new insights would be gained, since SDL is a normal modal logic of type KD, which is already covered by the prior work of Benzmüller and colleagues. \new{Secondly}, SDL is vulnerable to the well-known deontic paradoxes, including  in particular Chisholm's paradox of contrary-to-duty obligation, see \cite{PvTT21} for details. We thus focus here on  Dyadic Deontic Logics (DDLs) with a preference-based semantics, which originate from the works of  \cite{ddl:H69} and \cite{ddl:L73}. 
To represent conditional obligation sentences, an ``intensional'' dyadic operator (which is weaker than material implication) is employed. The proposed semantics generalizes that of SDL: the SDL-ish binary classification of states into good/bad is relaxed to allow for grades of ideality and accommodate classifications such as best, 2nd-best, and so forth.  
More specifically, a preference relation $\succeq$  ranks the possible worlds in terms of comparative goodness or betterness.\footnote{For $i\succeq j$, read ``$i$ is at least as good as $j$''. } The conditional obligation of $\psi$, given $\varphi$ (notation: $\bigcirc (\psi/\varphi)$) is evaluated as true if the best $\varphi$-worlds are all $\psi$-worlds. 
Like in modal logic, different properties of the betterness relation yield different systems. For further details on this framework, the reader is referred to the overview chapter by \cite{ddl:P21} 
found in the second volume of the \emph{Handbook of Deontic Logic and Normative Systems}.

In this paper, our emphasis is on two possible uses of the mechanized tool. First, we employ it as a tool for meta-reasoning about the considered logics.
So far the correspondences between properties and modal axioms have been established ``with pen and paper". This raises the question of how much of these correspondences can be automatically explored by modern theorem-proving technology. 
The automatic verification of correspondences can be done for the modal cube \citep{C47}. We want to understand if it can also be done for DDL.
\cite{C47} write: ``automation facilities could be very
useful for the exploration of the meta-theory of other logics, for example, conditional logics, since
the overall methodology is obviously transferable to other logics of interest". 
Here we follow up on that suggestion, building on further prior results from \cite{J45}, where the weakest available system (called  {\bf E}) has faithfully been embedded in Higher-Order Logic (HOL). In the present paper, we consider extensions of {\bf E}. 
We look at connections or correspondences between axioms and semantic conditions as ``extracted" by relevant soundness and completeness theorems. 
Thus, ``correspondence" is taken in the same (broad) sense that Hughes and Cresswell have in mind when they write: 
\begin{quote}
    ``D, T, K4, KB [are] produced by adding a single axiom to K and [...] in each case the system turns out to be characterized by  [sound and complete wrt] the class of models in which [the accessibility relation] $R$ satisfies a certain condition. When such a situation obtains--i.e. 
    when a system K+$\alpha$ is characterized by the class of all models in which $R$ satisfies a certain condition$-$we shall [...] say [...] that the wff $\alpha$ itself is characterized by that condition, or that the condition \emph{corresponds} [their italics] to $\alpha$." \cite[p. 41]{HC84}
    \end{quote}

\new{This is different from correspondence theory in the sense of \cite{Sah75} and \cite{Ben2001}. 
Typically, Sahlqvist-style modal correspondence theory studies the equivalence between modal formulas and first-order formulas over Kripke frames
via the so-called standard translation. The goal is to identify syntactic classes  of modal formulas that can be shown to define first-order conditions on frames, and which are themselves computable via an algorithm. Correspondence theory in this sense has not been developed for preference-based dyadic deontic logic and conditional logic yet. This is in part due to the more complex form of the truth conditions for the conditional.  The Sahlqvist/van Benthem method allows to establish an equivalence between an axiom and a property. By contrast, our method will give us only one direction of the equivalence, from the property to the axiom, but not yet the other direction. 

A distinctive feature of our method is its flexibility. We will primarily deal with the conventional evaluation pattern in terms of best antecedent-worlds, distinguishing between two notions of best, optimality and maximality, as is the custom in rational choice theory. For the sake of completeness, we will also consider variant truth conditions that do not rely on the limit assumption, which some consider controversial. Notably, \cite{ddl:L73} rejected this assumption. In a deontic context, it amounts to assuming the existence of ``the best of all possible worlds''.  For simplicity, we will confine our analysis to Lewis's variant rule. 

}

The second use we consider for our mechanized system is as a tool for assessing ethical arguments in philosophical debates. As an illustration, we look at one of the well-known paradoxes or impossibility theorems in population ethics, the so-called ``repugnant conclusion" due to \cite{Parfit1984-PARRAP}. We provide a computer encoding of the
repugnant conclusion to make it amenable to formal analysis and computer-assisted experiments. We believe that the formalization has the potential to further stimulate the philosophical debate on the repugnant conclusion, given the considerable simplifications it achieves.
Specifically, our formalization hints at the possibility of a fresh perspective on the scenario. While some have proposed overcoming the impossibility theorem by relinquishing the presupposed transitivity of ``better than," this solution is often deemed too radical. \newnew{We distinguish between ``better than'' as a relation on formulas and as a relation on possible worlds, the second being
used to elucidate the formal meaning of the first. Shifting the emphasis on the second},  our formalisation unveils a less extreme approach. It consists in  weakening transitivity suitably rather than discarding it entirely.
\newnew{However, we show that not all candidate weakenings of transitivity will do. In particular, drawing on \cite{ddl:parent24}, we argue that acyclicity (or even quasi-transitivity) does the job, but not the interval order condition. We also raise the question if transitivity is the sole cause of the paradox. We point out that under the standard interpretation of ``best'' in terms of maximality (quasi-)transitivity generates an inconsistency only if the set of possible worlds is assumed to be finite$-$an assumption  that might appear overly limiting, if not arbitrary.  This finding allows us to resolve (negatively) an open problem from previous work:\footnote{Cf. \cite{ddl:lou19} and Parent (\citeyear{ddl:P21,ddl:parent24}).} whether under the rule of maximality the finite model property holds for preference models with transitive, quasi-transitive, or interval order relations.

Until now, these particular points have remained unnoticed. Previously, one could verify them manually, but now, automation eliminates the need for logical expertise. Additionally, experimenting and implementing variations, such as changing the evaluation rule for the conditional, is straightforward.
The practicality of the proposed tool lies in its ability to swiftly (dis-)confirm alternative hypotheses with minimal reliance on logical expertise.
With this case study--the first of its kind--we hope to provide evidence that automated tools may help to facilitate the understanding and assessment of ethical arguments in philosophical debates.
Previously, the second author utilized similar techniques in computational metaphysics. Notably, the inconsistency within the axioms of Gödel's ontological argument went unnoticed until 2013, when it was automatically identified by the higher-order theorem prover Leo-II$-$see \cite{C55}.}

\new{Readers should be warned that there is less standardization in preference semantics for dyadic deontic logic
than in the usual Kripke-style semantics for (monadic) deontic logic, and more room for variation.
This is because several factors must be juggled all at once. In
this paper we stick to \cite{ddl:A87,ddl:A02}'s approach, but the account is also applicable to further variants.  Those who wish to get a general overview of
the possible approaches that can be taken might find it useful to consult \cite{ddl:M93}.
The interested reader will find in \cite{ddl:lou19} and \cite{ddl:P21} further pointers to the literature.}



The paper is organized as follows. Section \ref{e} recalls system  {\bf E} and its extensions.
 Section \ref{hol} shows the embedding of {\bf E} in Isabelle/HOL. 
Section \ref{cor} studies the correspondence between the properties of the betterness relation and the axioms. 
Section~\ref{rc} discusses the repugnant conclusion. Section~\ref{conc} concludes.\footnote{The theory files are available for downloading  at \url{http://logikey.org} under sub-repository ``/Deontic-Logics/cube-dll/" (files ``\new{DDLcube.thy", ``mere\_addition\_opt.thy", ``mere\_addition\_max.thy'' and ``mere\_addition\_lewis.thy''}). 
A corresponding (but slightly modified) Isabelle/HOL dataset is presented in \cite{CondNormReasHOL-AFP}.}



\section{System {\bf E}} \label{e}
We describe the language, the semantics and proof theory of system {\bf E} and its extensions. 

\new{\subsection{Language}

The language, call it  $\mathcal{L}$,
is defined by the following BNF:
\begin{flalign*}
& \mbox{Atomic formulas: } \phantom{x}p\in \mathbb{P} \\
& \mbox{Formulas: }\phantom{xxxxxx} A\in \mathcal{L} 
\\ A::= p\mid &\neg A \mid A\vee A\mid \square A\mid
	\bigcirc (A/A)
\end{flalign*}
$\neg A$ is read as ``not-$A$'', and $A\vee B$ as ``$A$ or
$B$''. $\square A$ is read as ``$A$ is settled as true'', and
$\bigcirc(B/A)$ as ``$B$ is obligatory, given $A$''. 

The Boolean connectives other than ``$\neg$" and ``$\vee$" are defined as usual. $\Diamond A$ is short for $\neg\square\neg
A$. $P(B/A)$ (``$B$ is
permitted, given $A$'') is short for $\neg \bigcirc(\neg B/A)$,
$\bigcirc A$ (``$A$ is unconditionally obligatory'') and $PA$ (``$A$ is
unconditionally permitted'') are short for $\bigcirc(A/\top)$ and
$P(A/\top)$, where $\top$ denotes a tautology.

}




\subsection{Semantics}
We start with the main ingredients of the semantics. A preference model is a structure
$M=(W, \succeq, v)$, where $W$ is a non-empty set of possible worlds (called its ``universe''), $\succeq$ is a preference relation ranking the elements of $W$ in terms of betterness or comparative goodness, and $v$ is a function assigning to each atomic formulas a subset of $W$ (intuitively, the subset of those worlds where the atomic formula is true).
$a\succeq b$ may be read ``$a$ is at least as good as $b$". \new{Also,} $\succ$ is the strict counterpart of $\succeq$, defined by $a\succ b$ ($a$ is strictly better than $b$) iff $a\succeq b$ and
$b\not\succeq a$. \new{And} $\approx$ is the equal goodness relation, defined by $a\approx b$ ($a$ and $b$ are equally good)  iff $a\succeq b$ and
$b\succeq a$. \new{For future reference, note that by definition $\succ$ is irreflexive (for all $a$, $a\not\succ a$) and asymmetric (for all $a$, $b$, if $a\succ b$ then $b\not\succ a$).}

A model $M$ is said to be finite if its universe $W$ is. The truth conditions for modal and deontic formulas read:
\begin{itemize}
	\item $M,a\vDash \Box \varphi
   \mbox{ iff }  \forall b \in W \mbox{ we have }  M,b\vDash \varphi$ 
	\item $M,a\vDash \bigcirc (\psi/\varphi) \mbox{ iff } \forall b\in \mathrm{best}(\varphi) \mbox{ we have }	M, b\vDash \psi$
\end{itemize}
When no confusion can arise, we omit the reference to $M$ and simply write $a\models \varphi$. Intuitively, $\bigcirc (\psi/\varphi)$ is  true if the  best $\varphi$-worlds are all $\psi$-worlds. There is variation among authors regarding the formal definition of ``best''. It is sometimes cast in terms of maximality (we call this the max rule) and some other times cast in terms of optimality (we call this the opt rule).
\new{A} $\varphi$-world $a$ is maximal if it is not (strictly) worse than any other $\varphi$-world. It is optimal if it is at least as good as any $\varphi$-world. \new{An optimal element is maximal, but not the other way around.} The two notions coincide only when ``gaps" (incomparabilities)  in the ranking are ruled out. Formally: 

	\begin{center} 
 $\begin{array}{|c|c|}
\hline
\textbf{Max rule} & \textbf{Opt rule} \\
	\mathrm{best}(\varphi) =\mx(\varphi) &	\mathrm{best}(\varphi)=\op(\varphi)\\
\hline
\end{array}$
 \end{center}
where
	\begin{flalign*}
			a\in \mx(\varphi) 
		& \Leftrightarrow  a\models 
		\varphi\;\&\; \neg \exists b \;(b \vDash \varphi \;\&\; b\succ a) \\
		a\in \op(\varphi) &\Leftrightarrow  a\models 
		\varphi\;\&\; \forall b \;(b \vDash \varphi
		\rightarrow a\succeq b)
		\end{flalign*}
The relevant properties of $\succeq$ are (universal quantification over worlds is left implicit):

\begin{itemize}
    \item Reflexivity: $a\succeq a$;
        \item \new{Totality} or (strong) connectedness:  $a\succeq b$ or  $b\succeq a$ (or both);
        \item Transitivity: if $a\succeq b$ and  $b\succeq c$, then  $a\succeq c$;
        \item \new{Various weakenings of transitivity (from so-called rational choice theory):}
        \begin{itemize}
    \item \new{Quasi-transitivity: if $a\succ b$ and $b\succ c$ then $a\succ c$;
   \item Acyclicity: if 
$a\succ^{\star}b$, then
$b\not\succ a$, where $\succ^{\star}$ is the transitive closure of $\succ$;
\item Suzumura consistency: 
if 
$a\succeq^{\star}b$, then
$b\not\succ a$, where $\succeq^{\star}$ is the transitive closure of $\succeq$;
 \item Interval order: $\succeq$ is reflexive and Ferrers (if $a\succeq b$ and  $c\succeq d$, then  $a\succeq d$ or $c\succeq b$).}
\end{itemize}
\end{itemize}
\new{Intuitively, quasi-transitivity demands that the strict part of the betterness relation be transitive. Acyclicity rules out strict betterness cycles. Suzumura consistency rules out cycles with at least one instance of strict betterness. 
Acyclicity 
may be interpreted as generalizing asymmetry  to a path of arbitrary length. 
Totality implies reflexivity. Given reflexivity and Ferrers, totality follows, and so the interval order condition can equivalently be defined by the pair  ``totality + Ferres".}
Intuitively, the interval order condition permits instances where transitivity of equal goodness fails, due to discrimination thresholds. These are cases where $a\approx b$ and $b\approx c$ but $a\not\approx c$ (see Luce~\citeyear{L56}).

\new{\begin{figure}[h]\centering\caption{\new{Weakenings of transitivity (\cite{ddl:parent24})}}
\label{wt}
\includegraphics[scale=1]{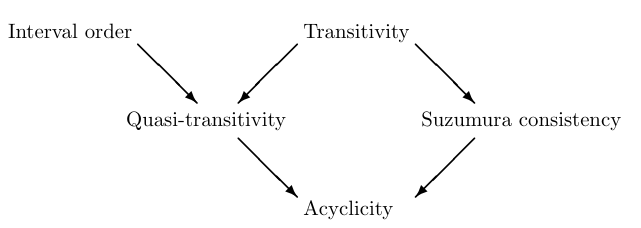}
\end{figure}
These weakenings of transitivity are discussed in greater depth in \cite{ddl:parent24}. Fig.~\ref{wt} 
 shows their relationships. An arrow from one condition to the other means that the first implies the second. The lack of an arrow between two conditions means that they are independent.\footnote{\new{Proofs and additional discussion may be found in \cite{ddl:parent24}.}} 

}

\newpage
Lewis's limit assumption is meant to rule out sets of worlds without a ``limit" (viz. a best element).  Its exact formulation varies among authors. It exists in (at least) the following four versions, where 
$\mathrm{best}\in\{\mx,\op\}$:
\begin{subequations}
\begin{align}
& \mbox{\underline{Limitedness}}\notag\\
&\mbox{ If } \exists x \mbox{ s.t. } x\models\varphi \mbox{ then }\mathrm{best}(\varphi ) \not=\emptyset \tag{LIM} \label{l} \\
& \mbox{\underline{Smoothness} (or stopperedness)}\notag\\
& \mbox{ If } x\models \varphi , \mbox{ then: either } 
x\in \mathrm{best}(\varphi ) \mbox{ or } \exists y \mbox{ s.t. }  
y\succ x \;\& \;y\in\mathrm{best}(\varphi) \tag{SM}\label{smooth}
\end{align}
\end{subequations}
A betterness relation $\succeq$ will be called ``opt-limited" or ``max-limited" depending on whether (\ref{l}) holds with respect to $\op$ or $\mx$. Similarly, it will be called ``opt-smooth"
or ``max-smooth" depending on whether (\ref{smooth}) holds with respect to
 $\op$ or $\mx$. For pointers to the literature, and the relationships between these versions of the limit assumption, see
 \cite{ddl:parent14}.


The above semantics may be viewed as a special case of the selection
function semantics favored 
by Stalnaker and generalized by \cite{ddl:ch75}. The preference relation is replaced with a selection function $f$ from formulas to subsets of $W$, such that, for all $\varphi$, $f(\varphi)\subseteq W$. Intuitively, $f(\varphi)$ outputs all the best $\varphi$-worlds. The evaluation rule for the dyadic obligation operator is thus given as: $\bigcirc (\psi/\varphi)$ holds when $f(\varphi)\subseteq \Vert\psi\Vert$, where $\Vert\psi\Vert$ is the set of $\psi$-worlds. It is known that when suitable constraints are put on the selection function, the two semantics validate exactly the same set of formulas \new{$-$ cf.}  \cite{ddl:parent15} for details.\footnote{One can go one step further, and make the selection function semantics an instance of a more general semantics equipped with a neighborhood function, like in traditional modal logic (cf. \cite{ddl:ch75}). \new{Neighborhood semantics for dyadic deontic logic are investigated by \cite{Segerberg1971}, \cite{nor86} and  \cite{gob04} among others.}} The correspondence between constraints put on the selection function and modal axioms have been verified by automated means by \cite{J26}. A comparison between this prior study and ours is left as a topic for future research. 





\subsection{Systems}

The relevant systems are shown in Fig. \ref{sys}.  A line between two systems indicates that the system to the left is strictly included in the system to the right. \new{{\bf E}, {\bf F} and {\bf G} are from \cite{ddl:A87}. {\bf F}+\ref{cm} and {\bf F}+\ref{dr} are from \cite{ddl:parent14} and \cite{ddl:parent24}, respectively}. 

\begin{center}
\begin{figure}[ht]
\centering
\includegraphics[scale=0.4]{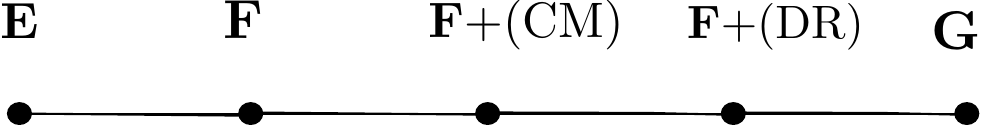}\caption{Systems}\label{sys}
\end{figure}
\end{center}

All the systems contain the classical propositional calculus and the modal system S5.\footnote{S5 is characterized by the rule of necessitation (``If $\vdash A$, then $\vdash\Box A$"), and the K, T and 5 axioms (5 is $\Diamond A\rightarrow\Diamond\Box A$).}  Then they add the following axiom schemata:

\begin{itemize}
\item For {\bf E}  (the naming follows \cite{ddl:P21}):

\vspace{-0.4cm}
\begin{minipage}{10cm}\setlength{\parindent}{2em}
\begin{flalign}
& \mbox{S5-schemata for }\Box
	\tag{S5}\label{a5}\\
	&\bigcirc (\psi\IMPL \xi/\varphi ) \IMPL (\bigcirc (\psi/\varphi) \IMPL \bigcirc
	(\xi/\varphi)\new{)}\tag{COK}\label{cok} \\ &\bigcirc (\psi/\varphi) \IMPL \Box\bigcirc
	(\psi/\varphi) \tag{Abs}\label{abs} \\ & \Box  \varphi\IMPL \bigcirc (\varphi/\psi)
	\tag{Nec}\label{nec} \\ & \Box (\varphi\IFF \psi) \IMPL (\bigcirc (\xi/\varphi)\IFF
	\bigcirc (\xi/\psi) )\tag{Ext}\label{ext} \\ & \bigcirc (\varphi/\varphi)
	\tag{Id}\label{id} \\ & \bigcirc (\xi/\varphi\AND \psi) \IMPL \bigcirc (\psi\IMPL
\xi/\varphi)\tag{Sh}\label{sh} 
\end{flalign}
\end{minipage}
\medskip
\item For {\bf F}: axioms of {\bf E} plus

\vspace{-0.4cm}
\begin{minipage}{10cm}\setlength{\parindent}{2em}
\begin{flalign}
		& \Diamond \varphi\rightarrow (\bigcirc (\psi/\varphi) \rightarrow
		P(\psi/\varphi))\tag{D$^\star$}\label{cod}
\end{flalign}

\end{minipage}

\medskip
\item For {\bf F}+(CM): axioms of {\bf F} plus

\vspace{-0.4cm}
\begin{minipage}{10cm}\setlength{\parindent}{2em}
\begin{flalign}
	&( \bigcirc (\psi/\varphi)\wedge \bigcirc (\xi/\varphi)) \rightarrow \bigcirc
	(\xi/\varphi\wedge \psi)\tag{CM}\label{cm} 
\end{flalign} 
\end{minipage}

\medskip
\item For {\bf F}+(DR): axioms of {\bf F} plus

\vspace{-0.4cm}
\begin{minipage}{10cm}\setlength{\parindent}{2em}
\begin{flalign}
		& \bigcirc (\xi/\varphi\vee \psi) \rightarrow (\bigcirc (\xi/\varphi) \vee \bigcirc (\xi/\psi) )\tag{DR}\label{dr}			
\end{flalign}
\end{minipage}

\medskip
\item For {\bf G}: axioms of {\bf F} plus: 

\vspace{-0.4cm}
\begin{minipage}{10cm}\setlength{\parindent}{2em}
\begin{flalign}
	 & (P(\psi/\varphi)\wedge \bigcirc
		(\psi\rightarrow \xi/\varphi)) \rightarrow \bigcirc (\xi/\varphi\wedge
		\psi)\tag{Sp}\label{spohn}
	\end{flalign} 
\end{minipage}
\end{itemize}
We give an intuitive explanation for these axioms. \ref{cok} is the conditional analog of the familiar distribution
axiom K. \ref{abs} is the absoluteness axiom of~\cite{ddl:L73}, and
reflects the fact that the ranking is not world-relative.  \new{\ref{nec}
is the dyadic deontic counterpart of the familiar necessitation
rule.} \ref{ext} permits the replacement of necessarily equivalent
formulas in the antecedent of deontic conditionals.  \ref{id} is
the deontic analog of the identity principle. \new{ \ref{sh} is named
after~\cite[p.\,77]{ddl:S88}, who seems to have been the
first to discuss it. One can see it as the deontic analog of one-half of the deduction theorem. \ref{cod} is the conditional analog of the familiar D axiom. In its equivalent form, $\Diamond \varphi\rightarrow \neg (\bigcirc (\psi/\varphi) \wedge
		\bigcirc (\neg \psi/\varphi)) $, this axiom rules out the possibility of conflicts between obligations arising in a context $\varphi$ that is possible. }\ref{cm} and \ref{dr} correspond to the principle of cautious monotony and disjunctive rationality from the non-monotonic logic literature \new{$-$\cite{ddl:KLM90}}.
 \ref{cm} tells us that complying with an obligation does not modify the other obligations arising in the same context.  \ref{dr} tells us that if a disjunction of states of affairs triggers an obligation, then at least one disjunct triggers this obligation.  
 \new{ Due to \cite{ddl:S75}, \ref{spohn} is best explained using the (more widely known) principle of rational monotony \ref{rm} from non-monotonic logic$-$see \cite{ddl:KLM90}. The two laws are inter-derivable in {\bf E}.  \ref{rm} is obtained by replacing, in \ref{spohn}, $\bigcirc (\psi\rightarrow \xi/\varphi)$ with $\bigcirc (\xi/\varphi)$, to read:
 \begin{flalign}
	 & (P(\psi/\varphi)\wedge \bigcirc
		(\xi/\varphi)) \rightarrow \bigcirc (\xi/\varphi\wedge
		\psi)\tag{RM}\label{rm}
	\end{flalign} }\ref{rm} says that realizing a permission does not modify our other obligations arising in the same context.

We give below the main soundness and completeness theorems. Those stated in Th.~\ref{sound} hold under both the opt rule and the max rule. It is understood that limitedness is cast in terms of opt when the opt rule is applied, and in terms of max when the max rule is applied. The same holds for smoothness.  
\begin{theorem}[Soundness and completeness,~Parent (\citeyear{ddl:P21,ddl:parent24})]\label{sound} (i)
\label{e:com}{\bf E} is sound and complete w.r.t.~the class of all preference models; (ii)
{\bf F} is sound and complete w.r.t.~the class of preference models in which $\succeq$ is limited; (iii) {\bf F}+\ref{cm} is sound and complete w.r.t.~the class of preference models in which $\succeq$ is smooth; (iv) {\bf F}+\ref{dr} is (weakly) sound and complete w.r.t.~the class of (finite) preference models in which $\succeq$ meets the interval order condition.
\end{theorem}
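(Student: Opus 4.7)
The plan is to handle soundness and completeness separately, and to treat parts (i)--(iv) in a uniform way, varying only the preference relation built in the canonical model.

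For \emph{soundness}, I would proceed axiom by axiom. For each schema \ref{cok}, \ref{abs}, \ref{nec}, \ref{ext}, \ref{id}, \ref{sh}, and for the S5-schemata for $\Box$, I verify that the schema evaluates to true in every preference model, using only the definition of $\mathrm{best}(\varphi)$ as $\mx(\varphi)$ or $\op(\varphi)$ (the arguments work for either). For \ref{cod} I additionally use the assumption that $\succeq$ is limited (so that $\mathrm{best}(\varphi)\neq\emptyset$ whenever $\varphi$ is possible), which rules out the simultaneous truth of $\bigcirc(\psi/\varphi)$ and $\bigcirc(\neg\psi/\varphi)$. For \ref{cm} I use smoothness: a world satisfying $\varphi\wedge\psi$ that is not best among $\varphi\wedge\psi$-worlds is dominated by a best such world, which can be ``lifted'' to a best $\varphi$-world via smoothness, and a short argument using $\bigcirc(\xi/\varphi)$ then yields $\xi$. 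For \ref{dr} I use the Ferrers condition, showing that if the best $\varphi$-worlds and the best $\psi$-worlds both fail to satisfy $\xi$, then one can construct, using totality+Ferrers, a best $\varphi\vee\psi$-world that falsifies $\xi$. Closure under modus ponens and the necessitation rule is immediate.

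For \emph{completeness}, I would use a canonical model construction. Let $W$ be the set of maximally {\bf E}-consistent sets of $\mathcal{L}$-formulas, and take the usual valuation. The delicate step is defining $\succeq$. Following the standard idea for preference-based conditional logics, I would define, for each $w\in W$, a set $|\varphi|_w$ of ``$\varphi$-worlds accessible from $w$'', using the $\Box$-modality to glue together an S5-class, and then read off the betterness relation from the behaviour of $\bigcirc(\cdot/\cdot)$ across that class. Concretely, for $a,b$ in the same $\Box$-cluster, set $a\succeq b$ iff for every $\varphi$ such that $b\models\varphi$, if $\bigcirc(\psi/\varphi)$ is in the cluster's common theory then $a\models\psi$ whenever $a\models\varphi$; equivalently, $a$ is at least as well-behaved as $b$ with respect to all conditional obligations triggered at $b$. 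The truth lemma for $\bigcirc(\psi/\varphi)$ is proved by showing that this preference relation makes the ``best $\varphi$-worlds'' coincide with the worlds whose theories contain every $\psi$ such that $\bigcirc(\psi/\varphi)$ is in the background theory; here \ref{id}, \ref{ext} and \ref{sh} are used to ensure coherence, and \ref{abs} guarantees that the construction is well-defined across the $\Box$-cluster.

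For parts (ii)--(iv) I add to the canonical model the extra structural properties needed. For (ii), the presence of \ref{cod} forces the ``best $\varphi$-worlds set'' to be non-empty whenever $\varphi$ is possible, which delivers limitedness. For (iii), \ref{cm} gives smoothness: if $x\models\varphi$ and $x\notin\mathrm{best}(\varphi)$, one shows by an inductive/Zorn-type argument over the canonical ranking that some strictly better best $\varphi$-world dominates $x$, the key compatibility being provided by \ref{cm}. For (iv), \ref{dr} is used to show that the canonical $\succeq$ restricted to the relevant cluster satisfies Ferrers, and then a filtration through the finite set of subformulas of the target non-theorem is applied to obtain the \emph{finite} model, accounting for the ``(weakly)'' qualifier.

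The main obstacle is (iv): the canonical model is not finite, and Ferrers need not survive an arbitrary filtration, so the filtration has to be designed carefully (closing under Booleans and under $\bigcirc(\cdot/\cdot)$ subformulas, and possibly grouping together worlds with identical ``profiles'' of conditional obligations) in order to simultaneously preserve totality, Ferrers, and the truth lemma. A secondary difficulty is verifying that the canonical $\succeq$ defined via $\bigcirc$-behaviour really is reflexive and total in the presence of \ref{dr}; here one leans on the interderivability phenomena noted after the statement of \ref{rm}. In each part I would finally cite the detailed constructions in Parent (\citeyear{ddl:P21,ddl:parent24}), since the theorem is quoted rather than reproved here.
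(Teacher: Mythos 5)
First, a point of calibration: the paper does not prove Theorem~\ref{sound} at all --- it is imported with a citation to Parent (\citeyear{ddl:P21,ddl:parent24}), so your closing remark that one would ``finally cite the detailed constructions'' is in fact the entirety of what the paper does. The question is therefore only whether your reconstruction of those constructions is sound. The overall architecture is right and matches the cited proofs: soundness axiom by axiom (with limitedness for \ref{cod}, smoothness for \ref{cm}, the interval order condition for \ref{dr}), completeness by a canonical model over maximal consistent sets restricted to a single $\Box$-cluster via \ref{abs}, and the extra frame properties read off from the extra axioms.

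There are, however, three concrete soft spots. (1) The canonical betterness relation as you literally define it --- $a\succeq b$ iff for \emph{every} $\varphi$ true at $b$, $a$ satisfies every $\psi$ with $\bigcirc(\psi/\varphi)$ in the common theory whenever $a\models\varphi$ --- is too strong: instantiating $\varphi:=\top$ already forces any world that is $\succeq$-above anything to satisfy all unconditional obligations, so $\succeq$ fails to be reflexive, $\op(\varphi)$ collapses to (essentially) the ideal worlds, and the right-to-left direction of the truth lemma breaks. The definition used in Parent's proofs is existential in the witnessing formula: roughly, $a\succeq b$ iff there is \emph{some} $\varphi$ with $b\models\varphi$ such that $a$ is ``syntactically best in $\varphi$'' ($a\models\varphi$ and $a$ contains every $\psi$ with $\bigcirc(\psi/\varphi)$ in the theory); reflexivity (via \ref{id}), the truth lemma, and the transfer of \ref{cod} and \ref{cm} to limitedness and smoothness all lean on that form. (2) Your soundness argument for \ref{cm} applies smoothness to the wrong set: the argument must start from a \emph{best} $\varphi\wedge\psi$-world $a$ and apply smoothness with respect to $\varphi$ --- either $a$ is already best in $\varphi$ (then $\bigcirc(\xi/\varphi)$ gives $\xi$), or some best $\varphi$-world $b\succ a$ exists, and then $\bigcirc(\psi/\varphi)$ gives $b\models\varphi\wedge\psi$, contradicting $a$'s bestness in $\varphi\wedge\psi$. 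Starting from a non-best $\varphi\wedge\psi$-world, as you do, proves nothing about the worlds that matter. (3) For part (iv), your own worry is the decisive one: Ferrers does not survive filtration, and Parent (\citeyear{ddl:parent24}) does not filter the infinite canonical model but builds a finite model directly from the subformulas of the target non-theorem (which is also why only weak completeness is obtained). As written, your plan for (iv) is the one step that would genuinely fail rather than merely need polishing.
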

In part (i), (ii) and (iii) of Th. \ref{sound}, and in Th. \ref{g:sound}, soundness and completeness are taken in their strong sense. They establish a
correspondence between the syntactic and semantic consequence relation while
also accommodating a potentially infinite set of assumptions. To be more precise, the theorems are of the form: where $\Gamma$ is a set of formulas or assumptions,  $\Gamma\vdash\varphi$ if and only if $\Gamma\models\varphi$.  
In part (iv) of Th.~\ref{sound}, soundness and completeness are taken in their weak sense: $\Gamma$ is required to be finite; this amounts to establishing a match between theorems and validities only. This restriction is because the completeness proof appeals in an essential way to the assumption that models are finite$-$for more details, see \cite[\S4]{ddl:parent24}.

 \begin{theorem}[Soundness and completeness, \cite{ddl:parent14}]\label{g:sound} (i)
Under the opt rule
{\bf G} is sound and complete w.r.t.~the class of preference models in which $\succeq$ is limited and transitive; (ii) under the max rule, {\bf G} is sound and complete w.r.t.~the class of preference models in which $\succeq$ is limited, transitive and total. 
\end{theorem}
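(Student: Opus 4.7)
The proof naturally splits into a soundness direction and a completeness direction, and since {\bf G} extends {\bf F} by the single schema (Sp), Theorem~\ref{sound} already handles the shared axioms and rules. So the task for soundness reduces to verifying (Sp) in the two specified classes of frames, and for completeness to building a canonical model whose preference relation is transitive and limited (and, in part (ii), total).

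Under the opt rule with transitivity and limitedness, to validate (Sp) I would suppose $a\models P(\psi/\varphi)\wedge\bigcirc(\psi\rightarrow\xi/\varphi)$. The permission, together with opt-limitedness, supplies a witness $b\in\op(\varphi)$ with $b\models\psi$. Pick any $c\in\op(\varphi\wedge\psi)$; the key claim is that $c\in\op(\varphi)$. Given any $d\models\varphi$: either $d\models\psi$, whence $c\succeq d$ by optimality in $\varphi\wedge\psi$; or $d\not\models\psi$, whence $b\succeq d$ by optimality of $b$ in $\varphi$, and $c\succeq b$ because $b\models\varphi\wedge\psi$, so transitivity gives $c\succeq d$. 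Since $c\in\op(\varphi)$ and $c\models\psi$, the second conjunct forces $c\models\xi$, as required.

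Under the max rule I additionally invoke totality. Again I pick $c\in\mx(\varphi\wedge\psi)$ and a witness $b\in\mx(\varphi)$ with $b\models\psi$. Suppose for contradiction that some $d\models\varphi$ satisfies $d\succ c$; the case $d\models\psi$ immediately contradicts maximality of $c$ in $\varphi\wedge\psi$, so assume $d\not\models\psi$. Because $b\in\mx(\varphi)$ we have $\neg(d\succ b)$, and totality then forces $b\succeq d$. Transitivity of $\succeq$ together with $d\succ c$ yields $b\succeq c$ together with $\neg(c\succeq b)$ (else $c\succeq d$, contradicting $d\succ c$), hence $b\succ c$; but $b\models\varphi\wedge\psi$, contradicting $c\in\mx(\varphi\wedge\psi)$.

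For completeness I would use a Henkin-style canonical-model construction, taking maximal {\bf G}-consistent sets as worlds, letting $\Box$ induce the S5 equivalence, and reading the atomic valuation off the sets. The canonical $\succeq$ would be defined so that the truth clause for $\bigcirc(\cdot/\cdot)$ is justified by (COK), (Abs), (Nec), (Ext), (Id) and (Sh); limitedness would be extracted from (D$^\star$) plus a Lindenbaum argument, transitivity from (Sp) (equivalently from its reformulation (RM)), and in part (ii) totality would be read off (Sp) combined with the max evaluation. I expect the main obstacle to be calibrating this canonical $\succeq$: the definition must be strong enough to yield transitivity (and, for max, totality) from (Sp), yet weak enough that the truth lemma for the dyadic operator still goes through and that opt/max-limitedness survives. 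Getting this balance right, and keeping the two evaluation rules cleanly separate so that the two parts of the theorem are proved in parallel rather than collapsed, is where the bulk of the technical work sits; this is the route taken in \cite{ddl:parent14}.
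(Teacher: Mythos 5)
The paper does not actually prove Theorem~\ref{g:sound}: it is stated as an imported result, with the proof delegated entirely to Parent (2014). So there is no in-paper argument to measure you against, and I will assess your sketch on its own terms. Your soundness half is correct. The verification of (Sp) under the opt rule needs only transitivity (opt-limitedness is not in fact used for the witness --- the permission $P(\psi/\varphi)=\neg\bigcirc(\neg\psi/\varphi)$ already guarantees some $b\in\mathrm{opt}(\varphi)$ with $b\models\psi$), and your max-rule argument correctly isolates where totality enters (converting $\neg(d\succ b)$ into $b\succeq d$). Both calculations match the correspondences the paper records in Table~1 and verifies mechanically (transitivity alone suffices for Sp under opt; transitivity plus totality under max), and limitedness is needed only for the (D$^\star$) axiom inherited from {\bf F}.

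The completeness half, however, is a plan rather than a proof, and that is where essentially all of the content of the cited theorem lives. You never propose a definition of the canonical betterness relation on maximal consistent sets, and everything turns on that choice: the truth lemma for $\bigcirc(\cdot/\cdot)$ under the chosen evaluation rule, the extraction of transitivity (and, for max, totality) from (Sp), and limitedness from (D$^\star$). Your own closing paragraph concedes this. Moreover, the fact that the two parts of the theorem require \emph{different} frame classes for the \emph{same} axiomatic system is a signal that the canonical constructions for opt and max cannot simply be run ``in parallel'' with the same relation; reconciling the two is a genuine technical issue, not a bookkeeping one. As it stands the proposal establishes soundness only; for completeness it records the shape of the difficulty without resolving it, so it cannot be accepted as a proof of the theorem.
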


For more background on these systems, and additional results, we refer the reader to  Parent (\citeyear{ddl:P21,ddl:parent24}).

\subsection{\new{Correspondences}}
Table \ref{results} shows some of the known ``correspondences" between semantic properties and formulas as extracted from Th. \ref{sound} and Th. \ref{g:sound}. \new{Thus, the term ``correspondence" is understood along the lines suggested by Hughes and Cresswell (Cf. Section \ref{intro})}. The leftmost column shows the properties of $\succeq$. The two middle columns show the corresponding modal axioms, the first column for the max rule, and the second one for the opt rule. It is understood that smoothness (resp. limitedness) is defined for max in the max column, and for opt in the opt column. The rightmost column gives the paper where the completeness theorem is established.
The symbol $\times$ indicates that the property (or pair of properties) is known not to correspond to any axiom, in the sense that the property does not modify the set of valid formulas. 

\begin{table}[ht]
\begin{center}
\begin{tabular}{l||c|c|c}
     Property     & Axiom ($\mx$) & Axiom ($\op$) & Ref.\\
\hline\hline 
   reflexivity  & $\times$ & $\times$ &\cite{ddl:parent15} \\
   \newnew{totality}  &   $\times$ & $\times$ &\cite{ddl:parent15}\\
     limitedness    & \ref{cod}  &\ref{cod} & \cite{ddl:parent15}\\
smoothness &  \ref{cm}& \ref{cm} &\cite{ddl:parent14} \\
transitivity& $\times$ & \ref{spohn}&Parent (\citeyear{ddl:parent14,ddl:parent24})\\
transitivity+ \newnew{totality} & \ref{spohn} &  $\times$& \cite{ddl:parent14}\\
interval order & \ref{dr} & \ref{dr}  & \cite{ddl:parent24}
\end{tabular}
\end{center}
$\phantom{xxxxxxxxxxxxxxxx}$\caption{Some correspondences}\label{results}
\end{table}

\new{To improve readability, we have used certain shortcuts, albeit with the potential drawback of simplifying the data. The lack of correspondence in the 1st, 2nd and 5th row (starting from the top, going  downwards) is for the general case, when no constraint is put on $\succeq$. Thus, assuming one of reflexivity, totality or transitivity (under the max) does not add new validities. Similarly, the correspondence for limitedness is independent of any other properties (or axioms) in the background. The correspondence results for smoothness, transitivity (under the opt), transitivity+totality (under the max) and  interval order assume \ref{cod} and limitedness in the background. Quasi-transitivity, Suzumura consistency and acyclicity are known not to correspond to any formula in the general case, under the max rule$-$\cite{ddl:parent24}. This holds whether or not limitedness or smoothness is assumed in the background. However, it is not known what happens under the opt rule. Therefore, we have put these three conditions aside.

}

\begin{figure}
\includegraphics[scale=0.23]{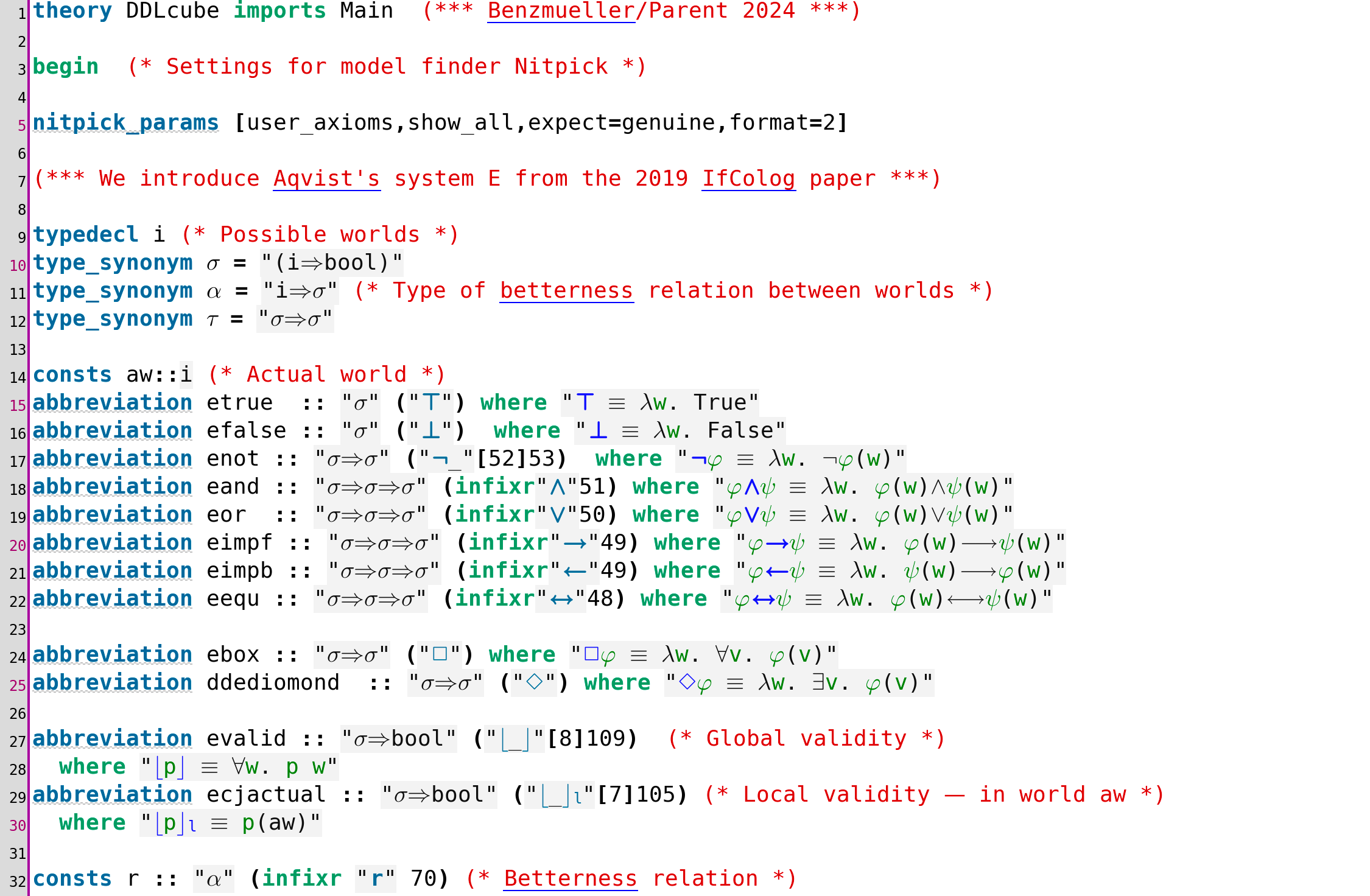}\caption{Basic semantical ingredients; propositional and modal connectives} \label{base}
\end{figure}
\begin{figure}[h]
\includegraphics[scale=0.23]{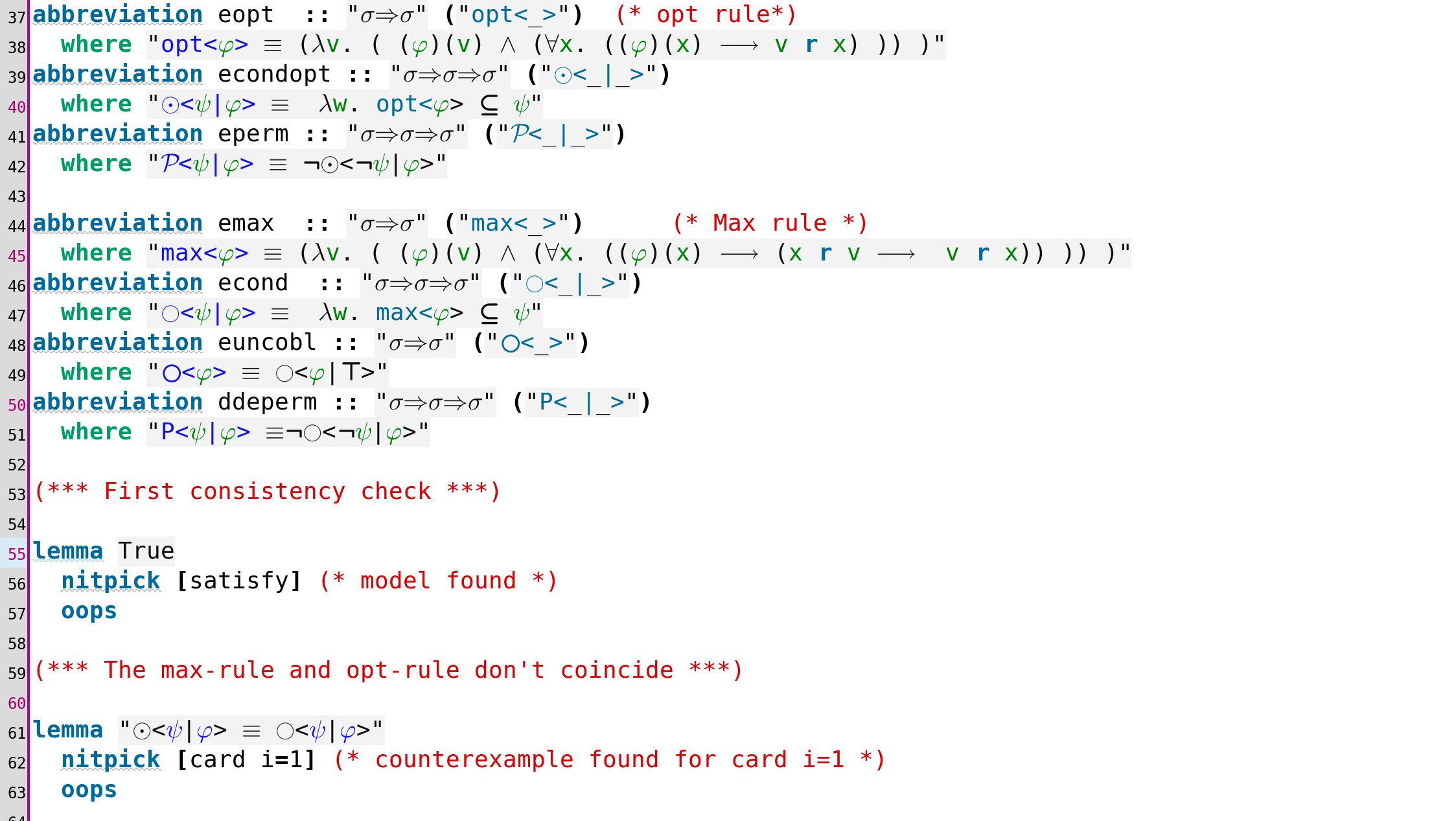}
\caption{Truth conditions}\label{rules}
\end{figure}

\section{System {\bf E} in Isabelle/HOL}
\label{hol}

Our modelling of System {\bf E} in Isabelle/HOL reuses and adapts prior work \citep{J45} and it instantiates and applies the LogiKEy methodology \citep{J48}, which 
supports plurality at different modelling layers.

\subsection{LogiKEy} \label{logikey}
Classical higher-order logic (HOL) is fixed in the LogiKEy methodology and infrastructure \citep{J48} as a \textit{universal meta-logic} \citep{J41} at the base layer (L0), on top of which a plurality of (combinations of) object logics can become encoded (layer L1). In the case of this paper, we encode extensions of System {\bf E} at layer L1 in order to assess them.
Employing these object logics notions of layer L1 we can then articulate a variety of logic-based domain-specific languages, theories and ontologies at the next layer (L2), thus enabling the modelling and automated assessment of different application scenarios (layer L3). Note that the assessment studies conducted in this paper at layer L3 do not require any further knowledge to be provided at layer L2; hence layer L2 modellings do not play a role in this paper.


LogiKEy significantly benefits from the availability of theorem provers for HOL, such as Isabelle/HOL, which internally provides powerful automated reasoning tools such as  \textit{Sledgehammer} (Blanchette et al., \citeyear{Sledgehammer}; Blanchette et al., \citeyear{blanchette2016hammering}) and \textit{Nitpick} \citep{Nitpick}. The automated theorem proving systems integrated via \textit{Sledgehammer} include higher-order ATP systems, first-order ATP systems, and SMT (satisfiability modulo theories) solvers, and many of these systems in turn use efficient SAT solver technology internally.
 Proof automation with \textit{Sledgehammer} and (counter-)model finding with \textit{Nitpick} were invaluable in supporting our exploratory modeling approach. 
These tools were very responsive in automatically proving (\textit{Sledgehammer}), disproving (\textit{Nitpick}), or showing consistency by providing a model (\textit{Nitpick}).
In this section and subsequent ones, we highlight some explicit use cases of \textit{Sledgehammer} and \textit{Nitpick}. They have been similarly applied at all levels as mentioned before.

\subsection{Faithful embedding of system {\bf E}}

In the work of \cite{J45}, it is shown that the embedding of {\bf E} in Isabelle/HOL is faithful, in the sense that a 
formula $\varphi$ in the language of ${\bf E}$ is valid in the class PREF of all preference models if and only if the HOL translation of $\varphi$  (notation: $\lfloor \varphi \rfloor$) is valid in the class of Henkin models of HOL. 

\begin{theorem}[Faithfulness of the embedding]\label{faith}
  \[\models^{\mathrm{PREF}} \varphi \text{ if and only if }
  \models^\text{HOL} 
  \lfloor \varphi \rfloor\]
\end{theorem}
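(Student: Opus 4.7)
The plan is to prove each direction by exhibiting a bijective-like correspondence between preference models and (suitable) Henkin interpretations of the HOL signature used in the embedding, and then establishing a truth lemma by structural induction on $\varphi$ that matches up the semantic clauses given in Section~2.2 with the HOL definitions displayed in Fig.~\ref{base}--\ref{rules}.

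First I would fix notation for the translation $\lfloor\cdot\rfloor$. Every object-language formula is mapped to a HOL term of type $i\to o$ (a set of worlds), with atomic propositions $p$ sent to a declared constant of that type, Boolean connectives lifted pointwise, $\Box$ translated as a quantifier over all worlds, and $\bigcirc(\psi/\varphi)$ translated as $\lambda w.\, \forall b.\, \mathrm{best}(\lfloor\varphi\rfloor)(b)\to \lfloor\psi\rfloor(b)$, where $\mathrm{best}$ is itself defined in HOL in terms of a declared binary relation $\succeq$ on $i$ via the clauses for $\op$ or $\mx$. Validity of $\lfloor\varphi\rfloor$ is then $\forall w.\lfloor\varphi\rfloor(w)$. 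The point of spelling this out is that by design, each HOL-defined connective is a literal transcription of its semantic clause.

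For the right-to-left (completeness) direction, given a preference model $M=(W,\succeq^M,v)$ and a world $a\in W$ with $M,a\not\models\varphi$, I construct a Henkin interpretation $\mathcal{H}_M$ whose domain for type $i$ is $W$, in which the HOL constant for betterness is interpreted as $\succeq^M$, and each propositional constant as $v(p)\subseteq W$. The truth lemma to prove by induction on $\varphi$ is: for all $w\in W$,
\begin{equation*}
M,w\models \varphi \iff \mathcal{H}_M \models \lfloor\varphi\rfloor(w).
\end{equation*}
The Boolean and $\Box$ cases are immediate from the lifted definitions; the dyadic case reduces, by the induction hypothesis applied to $\varphi$ and $\psi$, to checking that the HOL-level definition of $\mathrm{best}$ computes exactly the set $\op(\varphi)$ or $\mx(\varphi)$ specified in Section~2.2 $-$ which it does by construction. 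This yields $\mathcal{H}_M\not\models \forall w.\lfloor\varphi\rfloor(w)$. For the left-to-right (soundness) direction I invert the construction: given a Henkin interpretation $\mathcal{H}$ that refutes $\lfloor\varphi\rfloor$, read off a preference model whose universe is the domain assigned to $i$, whose betterness relation is the interpretation of the HOL constant $\succeq$, and whose valuation sends each $p$ to the interpretation of the corresponding HOL constant. The same truth lemma then transports the countermodel back.

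The main obstacle, and the place where one must be careful, is the case for $\bigcirc(\psi/\varphi)$ in the truth lemma. The HOL quantifier that appears inside the definition of $\mathrm{best}$ (both in the $\op$ and the $\mx$ reading) ranges over the Henkin domain of type $i$, and one must justify that this coincides with the object-level quantification over worlds in the definitions of $\op(\varphi)$ and $\mx(\varphi)$. In a general Henkin setting this is not automatic $-$ one needs that the chosen frame really interprets $i$ as the full universe $W$, and that the comprehension principles used in the embedding are satisfied so that sets such as $\op(\lfloor\varphi\rfloor)$ exist as objects of the appropriate type. Once the correspondence between $\mathcal{H}_M$ and $M$ is set up so that this quantifier-matching holds, the inductive step reduces to unfolding definitions, and the remaining cases are routine.
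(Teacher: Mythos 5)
The paper does not actually prove Theorem~\ref{faith}: it is imported from prior work \citep{J45}, where faithfulness of the embedding of {\bf E} is established. Your proposal reconstructs essentially the argument given there$-$a two-directional correspondence between preference models and Henkin models together with a truth lemma proved by induction on $\varphi$, with the genuine care point being that the Henkin domains and the denotations of the defined terms (such as $\mathrm{best}$) line up with full quantification over $W$$-$so it matches the intended proof in both structure and substance.
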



Remember that the establishment of such a result is our main success criterium at layer L1 in the LogiKEy methodology.



This first two screenshots show the encoding of {\bf E} in Isabelle/HOL. Fig. \ref{base} shows the basic ingredients in the preference model, and describes how the propositional and alethic modal connectives are handled. The betterness relation $\succeq$ is encoded as a binary relational constant $r$ (l. 32).
In Fig. \ref{rules}, the notions of optimality and maximality are encoded. 
Different pairs of modal operators (obligation, permission) are introduced to distinguish between the two types of truth conditions.  
The model finder \emph{Nitpick} is able to verify the consistency of the formalization (l. 55) and to verify the non-equivalence between the two types of truth conditions (l. 61). 
\emph{Sledgehammer} is able to show the validity of all the axioms of {\bf E}. \newnew{This is shown in Fig.~\ref{e}  for the max rule. It takes only a few ms for some provers to prove a formula. For instance cvc4 shows \ref{abs} in 1ms, and \ref{sh} in 10 ms.}  
\begin{figure}[ht]
    \centering
   \includegraphics[scale=0.23]{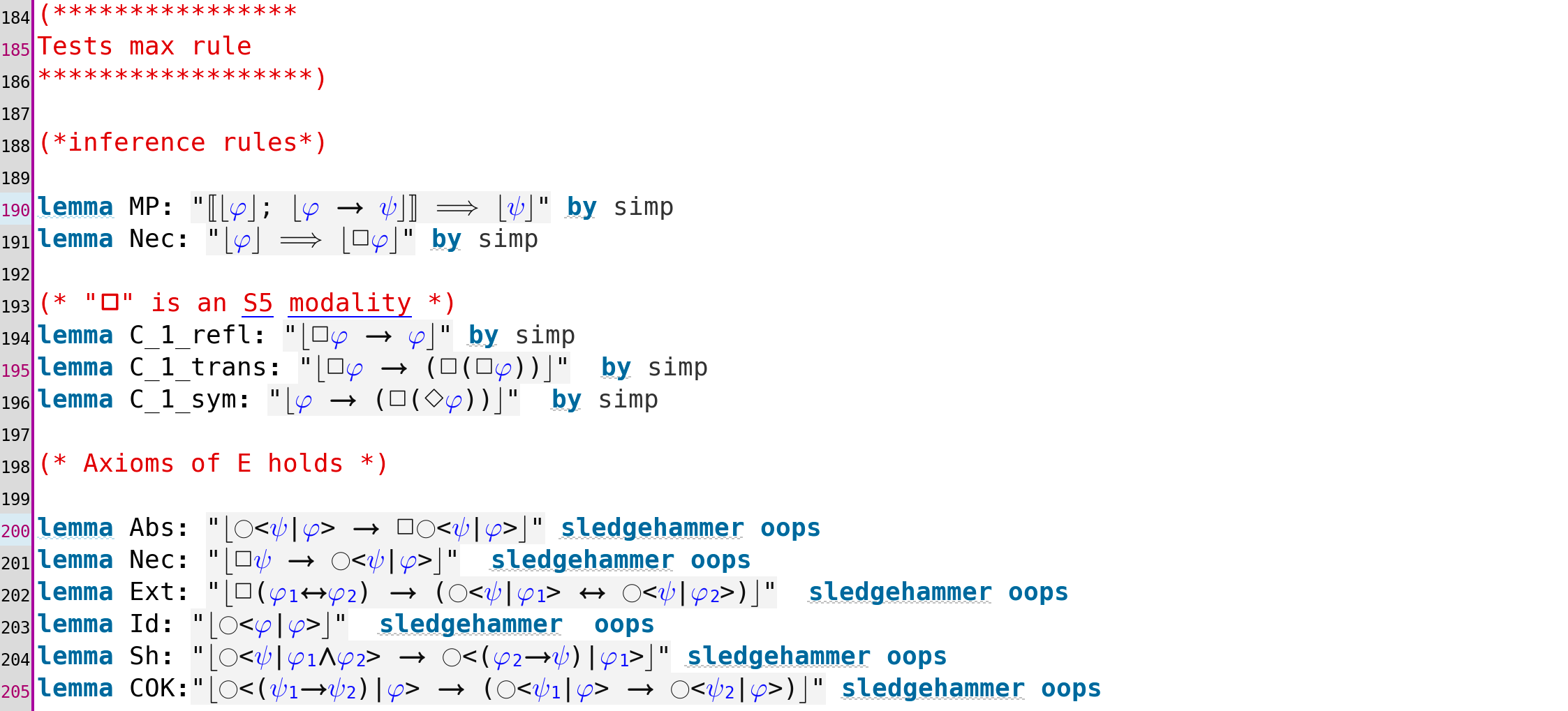}
   \caption{Axioms of {\bf E} (max)}
    \label{e}
\end{figure}

\subsection{Properties} 

The encoding of the properties of the betterness relation are shown in Figs. \ref{prop} and~\ref{trans}. On l. 99-104 of Fig. \ref{prop}, one sees the different versions of Lewis' limit assumption.
\begin{figure}[ht]
    \centering
   \includegraphics[scale=0.48]{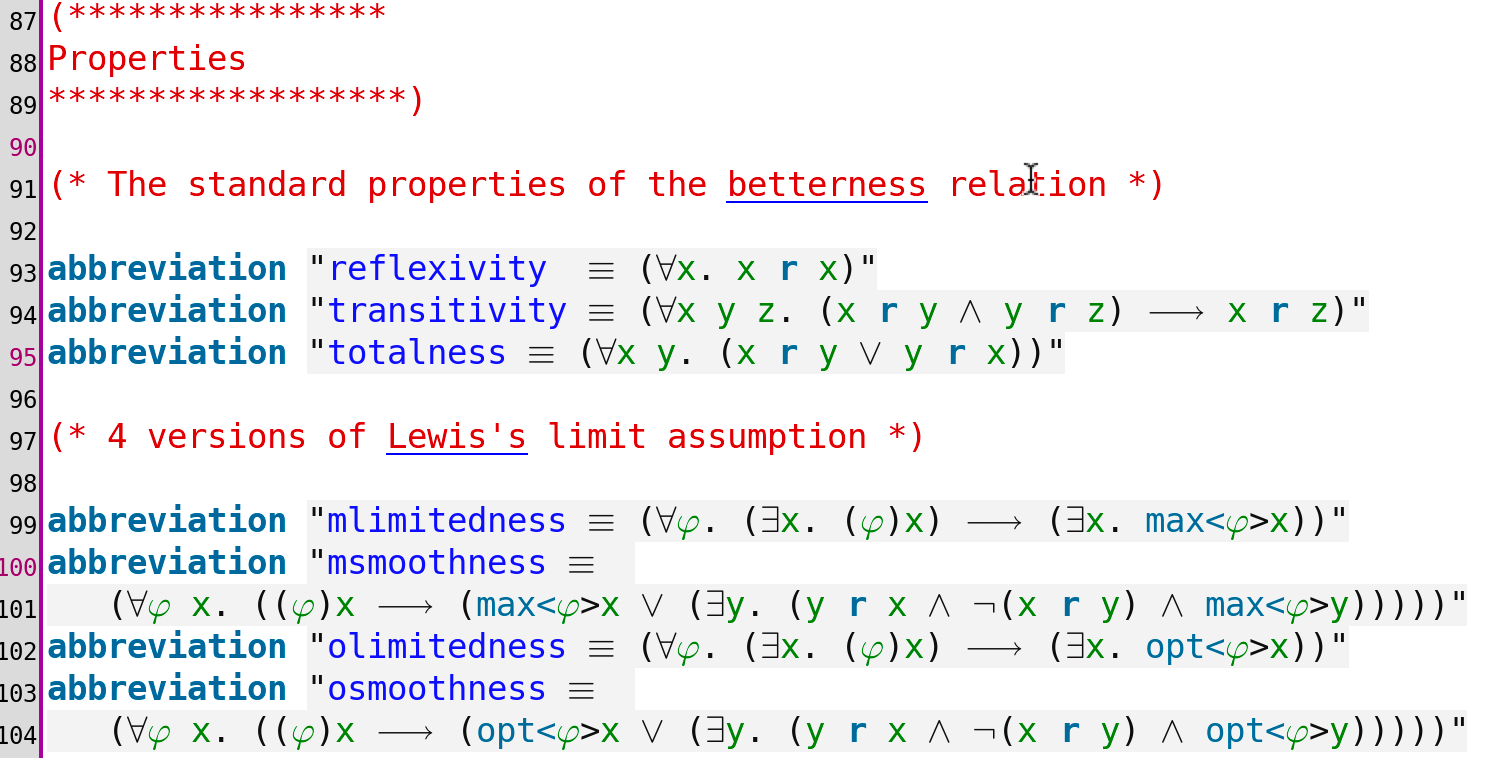}
   \caption{Standard properties}
    \label{prop}
\end{figure}
The property in Fig. \ref{trans} is the interval order condition. This one is usually described as the combination of totality with the Ferrers condition encoded in l. 146. 
 \emph{Sledgehammer}
confirms a fact often overlooked in the literature, that totality can be replaced by the simpler condition of reflexivity (l. 149-152). 
\begin{figure}[ht]
    \centering
   \includegraphics[scale=0.21]{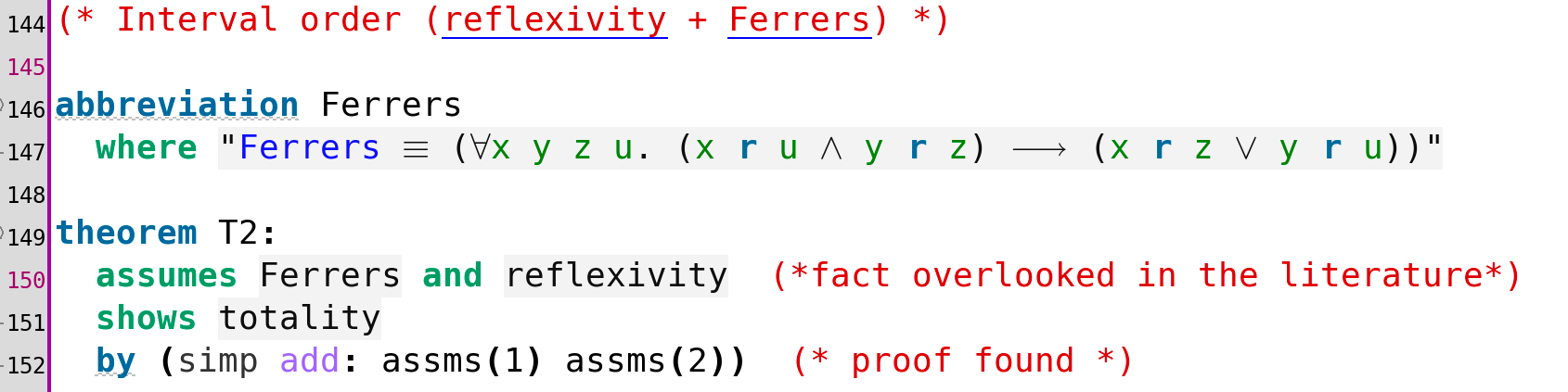}
   \caption{Interval order}
    \label{trans}
\end{figure}
The other candidate weakenings of transitivity discussed earlier are also encoded in the theory file. For simplicity's sake, we only give the example of quasi-transitivity and acyclicity.  The encoding of the second is shown in Fig. \ref{a-cycli}. In Isabelle/HOL, the transitive closure of a relation can be defined in a few lines, shown in Fig.~\ref{tr}. The encoding of quasi-transitivity is shown in Fig. \ref{quasi}.

\begin{figure}[ht]
    \centering
   \includegraphics[scale=0.47]{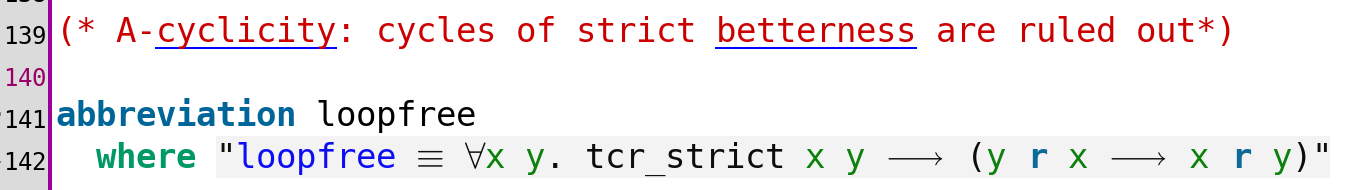}
   \caption{Acyclicity}
    \label{a-cycli}
\end{figure}

\begin{figure}[ht]
    \centering
   \includegraphics[scale=0.22]{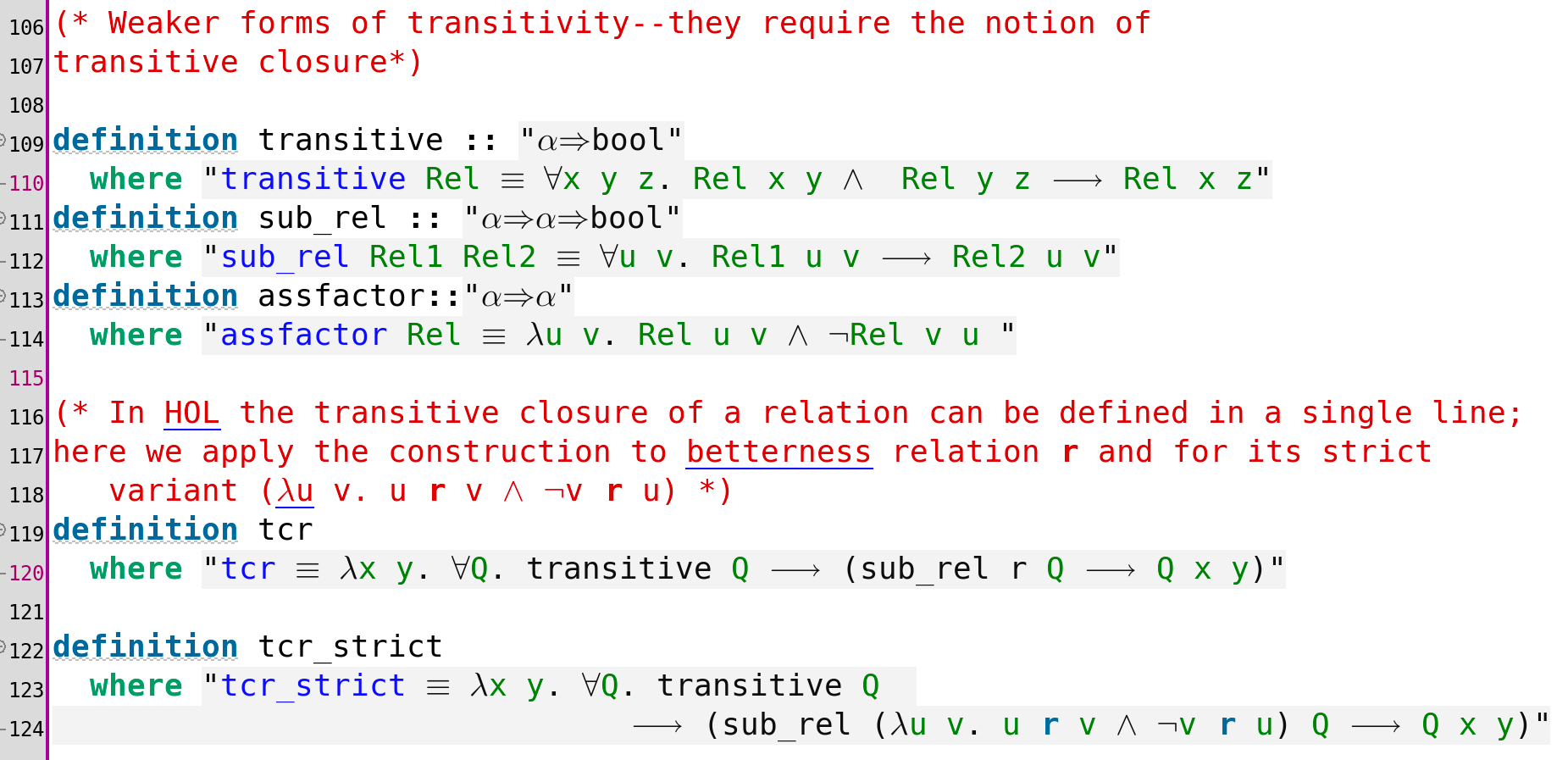}
   \caption{Transitive closure}
    \label{tr}
\end{figure}
\begin{figure}[ht]
    \centering
   \includegraphics[scale=0.5]{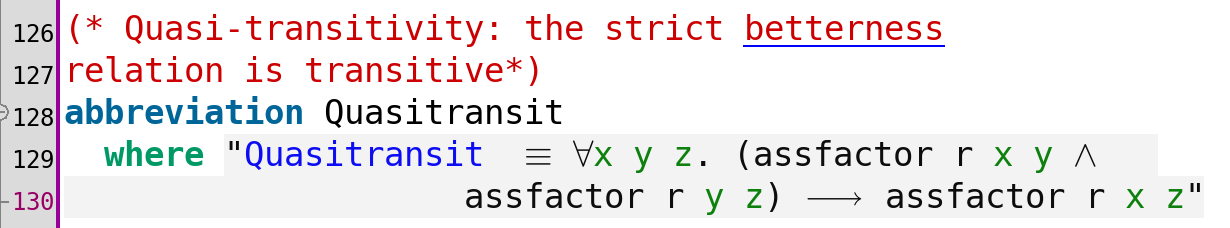}
   \caption{Quasi-transitivity}
    \label{quasi}
\end{figure}

\section{Verifying the correspondences} \label{cor}

\new{In this section, the correspondences for the axioms are investigated.
The task is to automatically verify that a given property is sufficient for the validation of the corresponding axiom as per  Table \ref{results}. We begin by assuming that the truth conditions for the obligation operator are given in terms of maximality, go on to consider the case where they are given in terms of optimality, and finally extend the scope of our inquiry  to a well-known variant evaluation rule for the conditional due to  \cite{ddl:L73}. The three
evaluation rules collapse only in the presence of all the properties of the betterness relation, including limitedness (which famously Lewis rejected). The consideration of Lewis's evaluation rule will also be needed for the case study in Section 5.
}
\subsection{Max rule}

Here we check known correspondences for  the max rule. \emph{Sledgehammer}  and \emph{Nitpick} confirm that an axiom is not valid unless the matching property is assumed: 
\begin{itemize}
\item If the relevant property is not assumed, counter-models for the corresponding axiom (D$^{\star}$, CM, DR  and Sp) are found by \emph{Nitpick}. This is Figs.~\ref{bb}, \ref{interval} and  \,\ref{totalplustr};
\item If the property is assumed, then the corresponding axiom is proved by \emph{Sledgehammer}. Fig.\,\ref{lim-smoot} shows it for limitedness and smoothness, Fig. \,\ref{interval} for the interval order condition, and Fig.\,\ref{totalplustr} for the combination of transitivity and totality.
\end{itemize}
The implications having the form ``property $\Rightarrow$ axiom'' are all verified. However, the converse implications are all falsified by \emph{Nitpick}. We will come back to this point later on. 

\begin{figure}[ht]
    \centering
   \includegraphics[scale=0.23]{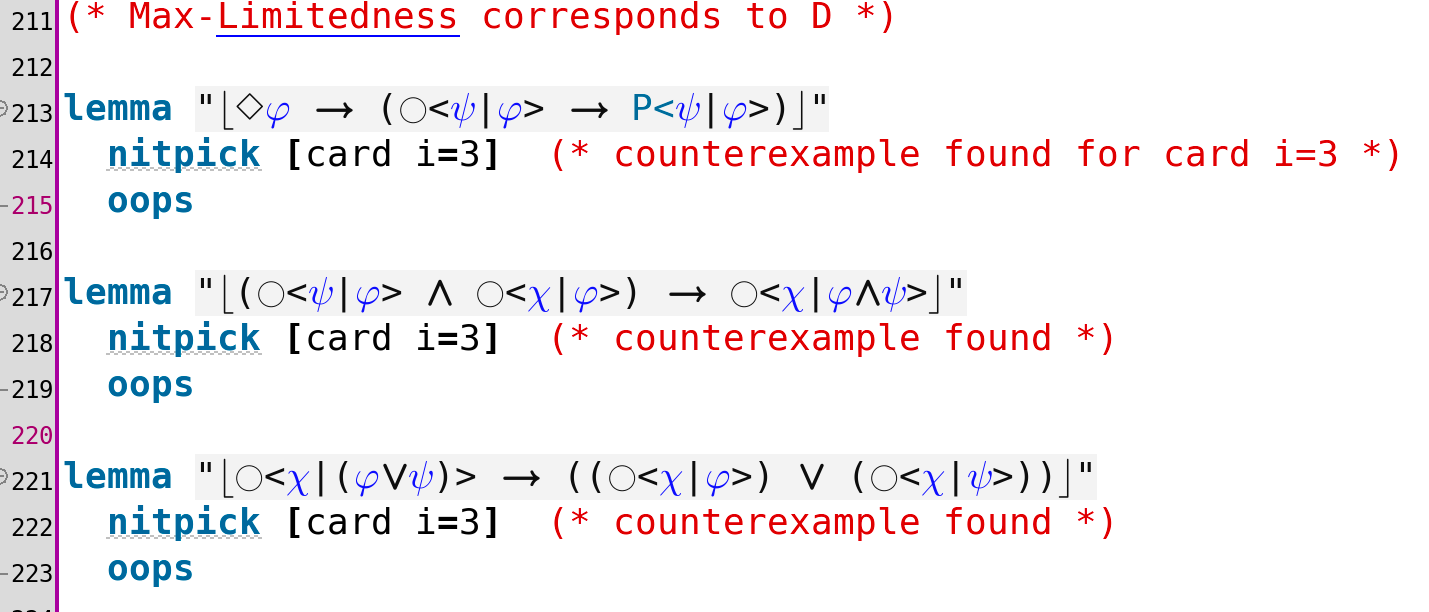}
  \caption{D$^{\star}$, CM and DR   invalid in general} \label{bb}
   \end{figure}

\begin{figure}[ht]

    \centering
   \includegraphics[scale=0.23]{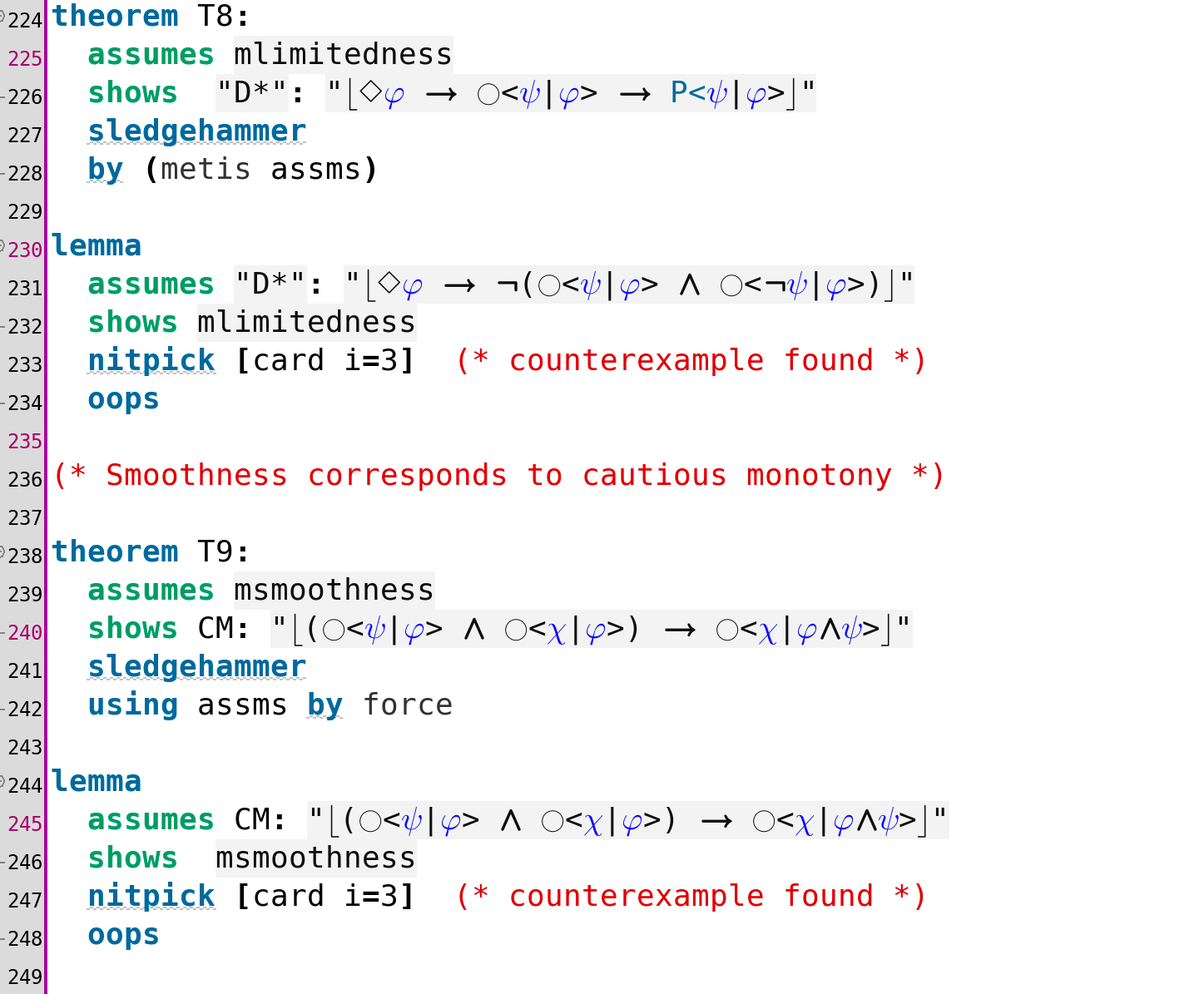}
   \caption{Limit assumption}
    \label{lim-smoot}
\end{figure}


\begin{figure}[ht]
    \centering
   \includegraphics[scale=0.22]{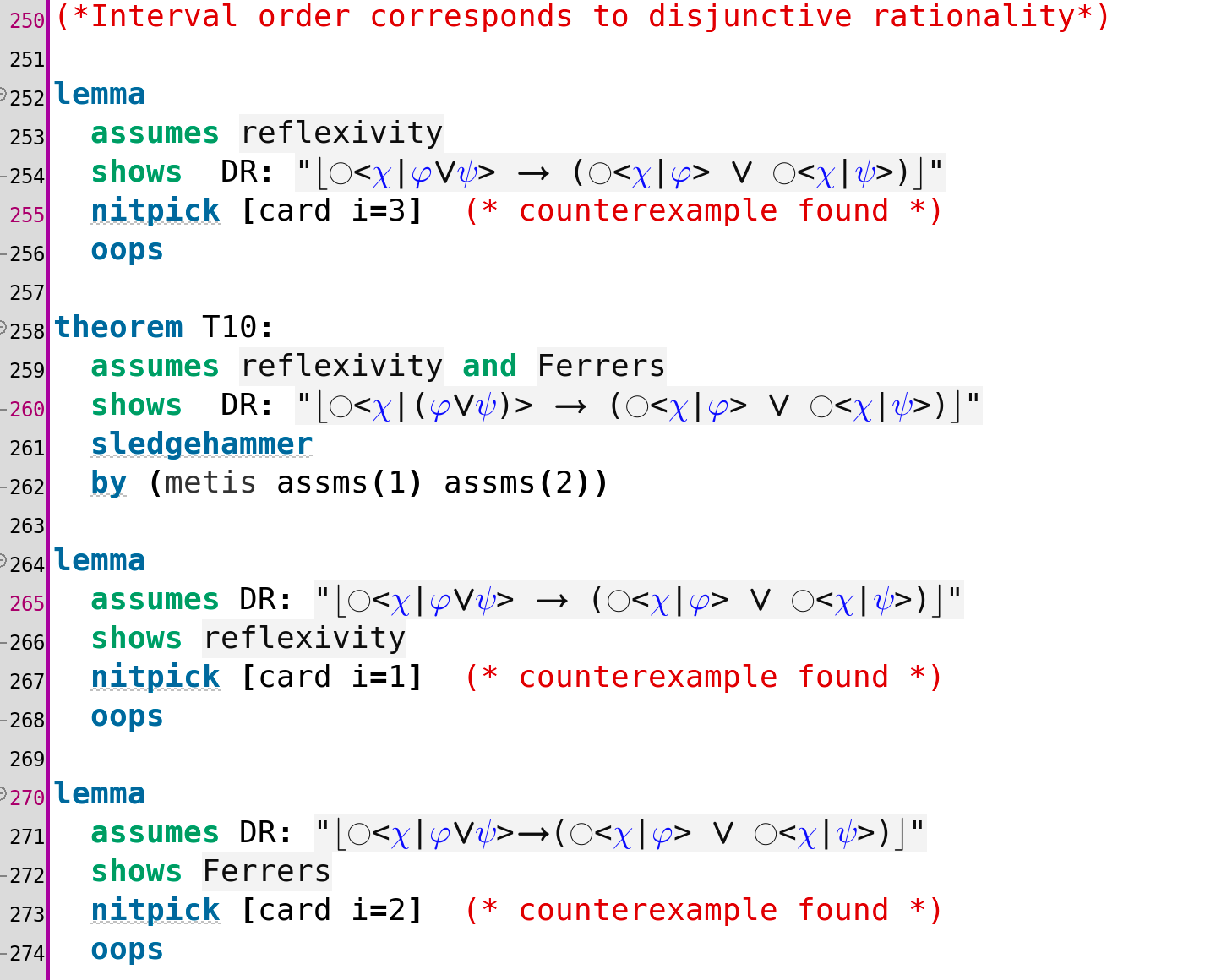}
   \caption{Interval order}
    \label{interval}
\end{figure}
\begin{figure}[ht]
    \centering
   \includegraphics[scale=0.22]{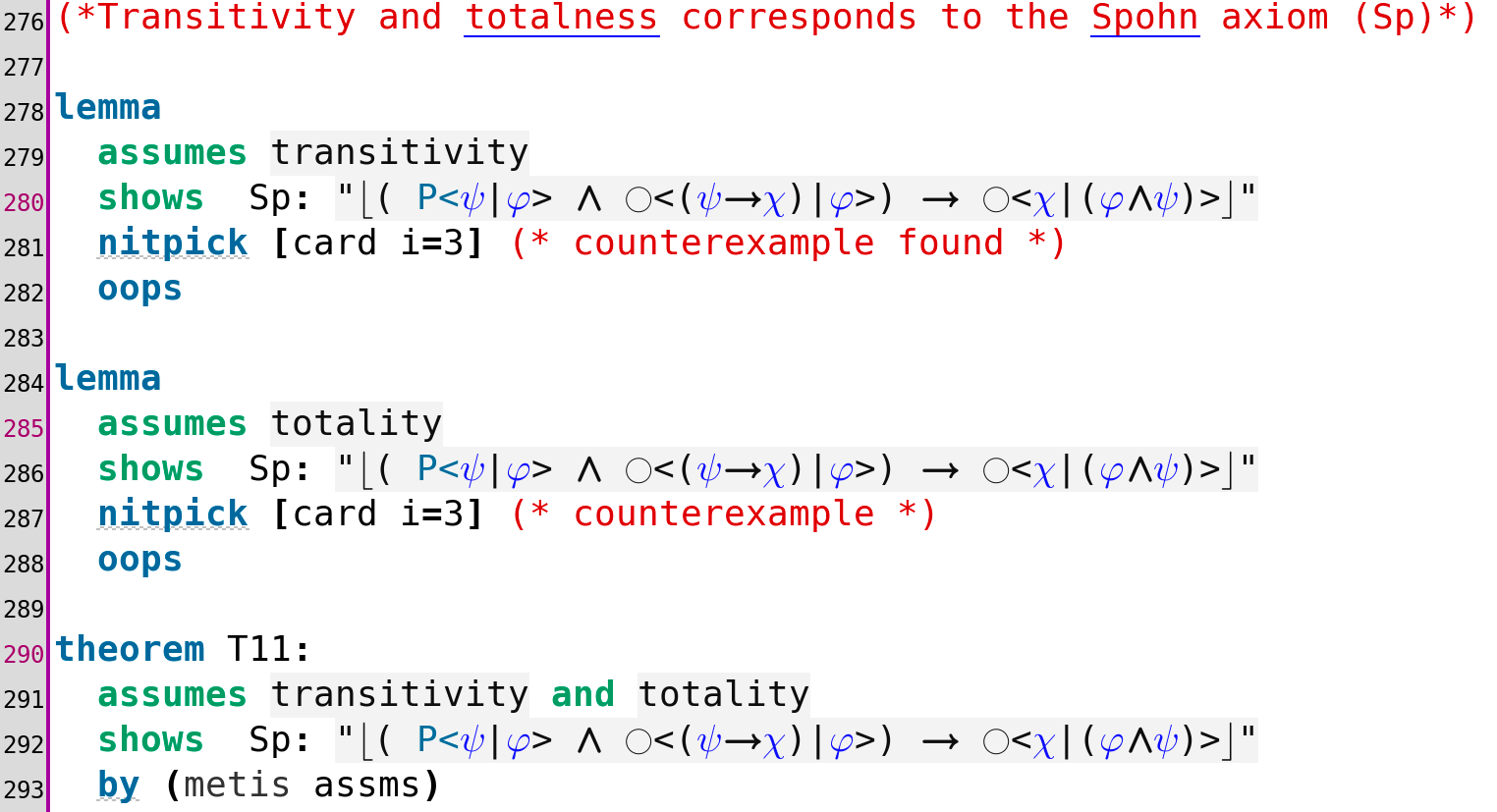}
   \caption{Transitivity and totality (max)}
    \label{totalplustr}
\end{figure}
\begin{figure}[ht]
    \centering
   \includegraphics[scale=0.22]{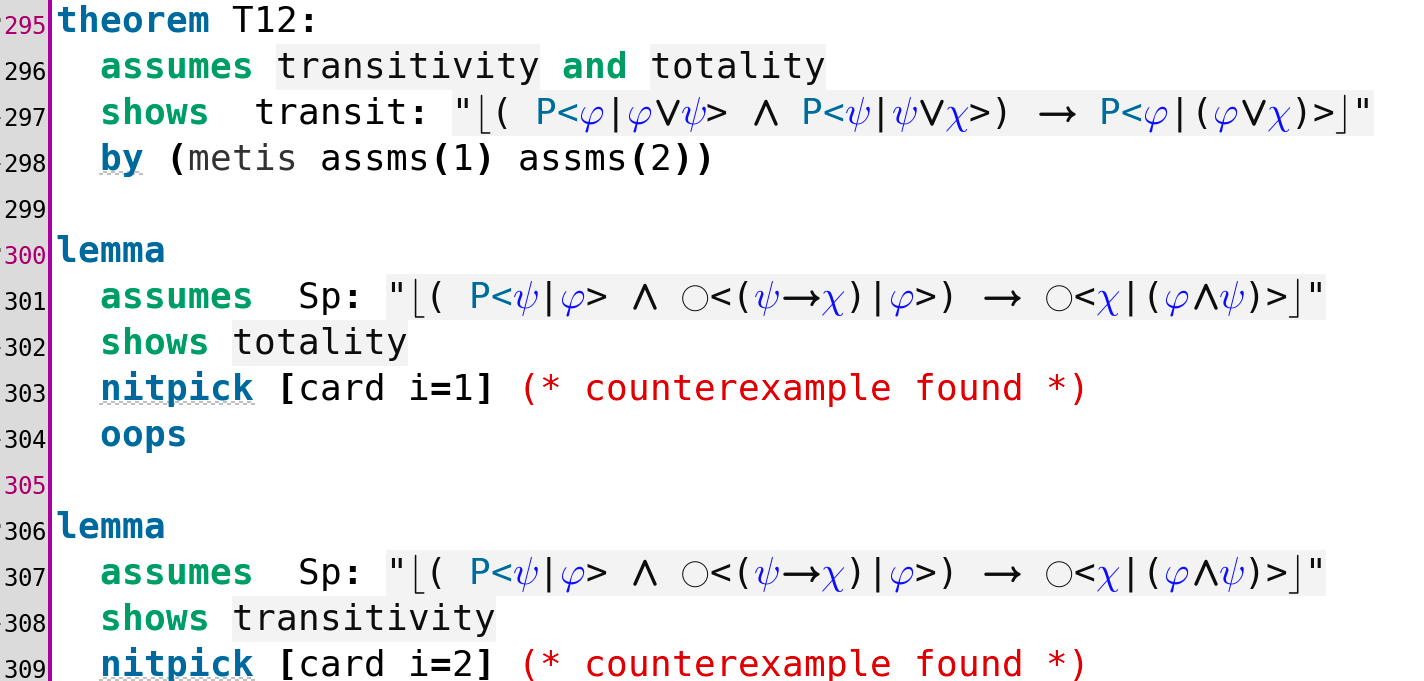}
   \caption{Transitivity and totality (max, cont.)}
    \label{totalplustr2}
\end{figure}




\subsection{Opt rule}

The outcome of our experiment is the same as for the max rule except for one small change. Transitivity no longer needs \new{totality} to validate Sp. This one only needs transitivity. Besides, the assumption of transitivity of the betterness relation gives us a principle of transitivity for a weak preference operator over formula\newnew{s}, defined by $\varphi\geq\psi$  iff $P(\varphi/\varphi\vee\psi)$.
This is shown in~Fig.~\ref{transi-opt}.  Again the converse implication is falsified by \emph{Nitpick}
(l. 356-361).
\begin{figure}[ht]
    \centering
   \includegraphics[scale=0.22]{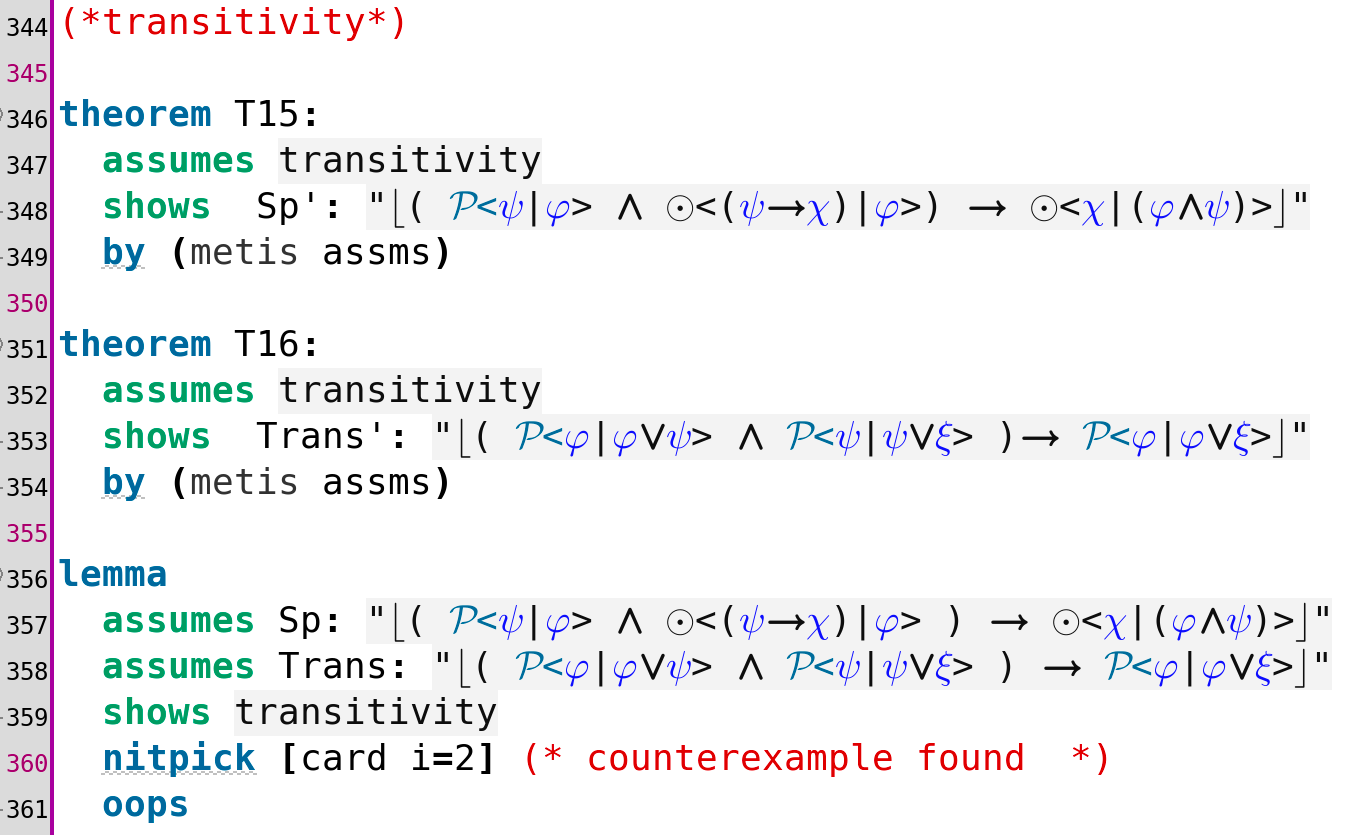}
   \caption{Transitivity (opt)}
    \label{transi-opt}
\end{figure}

\subsection{Inclusion}
In the work of \cite{C47},  proper inclusion between  systems in the modal cube are verified by looking at the model constraints of their respective axiomatizations. Because of the lack of full equivalence between modal axiom and property of  the relation, we cannot do the same, at least not yet.  Nor can we show equivalence between systems when restraining the number of worlds.

\newpage

\subsection{The \ref{lewis} truth conditions (Lewis)}\label{lew}

\new{We extend the scope of our inquiry to other truth conditions for the conditional}. Here we consider the variant rule proposed by \cite{ddl:L73}.
In order to avoid any commitment to the limit assumption, he suggests that $\bigcirc(\psi/\varphi)$ should be true whenever there is no $\varphi$-world or there is a $\varphi\wedge\psi$-world which starts a (possibly infinite) sequence of increasingly better \new{$\varphi\rightarrow\psi$-worlds, in which the obligation is never violated}. Formally:
\begin{flalign}\begin{split}
a\vDash \bigcirc (\psi/\varphi) \mbox{ iff } &   
\neg \exists b \;\;(b\models \varphi) \mbox{ or } \\ &
   \exists b \;\;(b\models\varphi \wedge \psi \;\;\& \;\;\forall c \;(c\succeq b \Rightarrow  c 
\models \varphi\rightarrow \psi)) \end{split}\tag{$\exists\forall$}\label{lewis}
\end{flalign}
We shall refer to the statement appearing at the right-hand-side of ``iff" as the \ref{lewis} rule.
The encoding is shown in Fig.\ref{lew-rule}.  
\begin{figure}[ht]
    \centering
   \includegraphics[scale=0.22]{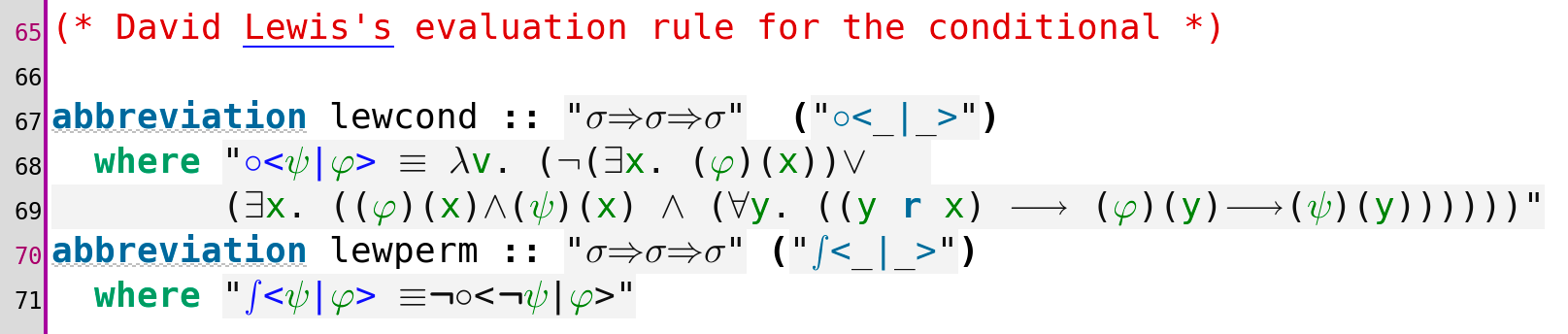}
   \caption{\ref{lewis} rule}
    \label{lew-rule}
\end{figure}

Isabelle/HOL can verify in what sense the standard account in terms of best requires the limit assumption. The law ``from $\Diamond\varphi$, $\bigcirc (\psi/\varphi)$ and $\bigcirc(\neg\psi/\varphi)$ infer $\bigcirc (\chi/\varphi)$" is valid.  This is known as the principle of ``deontic explosion'', often called DEX. It says that in the presence of a conflict of duties (unless it is triggered by an ``inconsistent" state of affairs) everything becomes obligatory. This has led most authors to make the limitedness assumption in order to validate D*, and hence make DEX harmless: the set $\{\Diamond\varphi, \bigcirc (\psi / \varphi),\bigcirc(\neg\psi/\varphi)\}$ is not satisfiable. This is shown in Fig.~\ref{dex}. On l. 394, the validity of DEX is established under the max rule. On l. 398, DEX is falsified under the \ref{lewis} rule.
\begin{figure}[ht]
    \centering
   \includegraphics[scale=0.22]{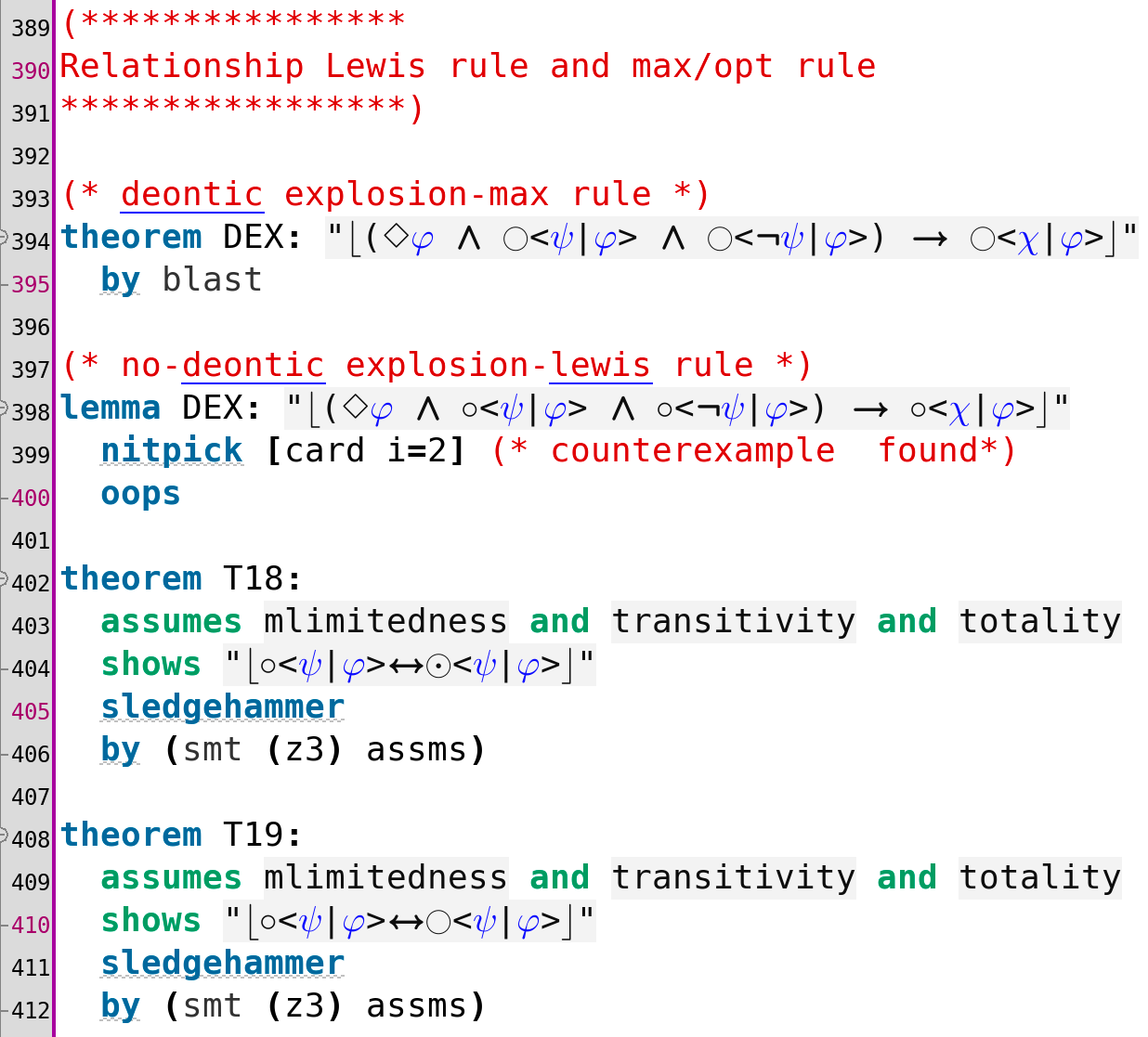}
   \caption{Deontic explosion (DEX)}
    \label{dex}
\end{figure}

Isabelle/HOL is also able to verify that when all the standard properties of the betterness relation are assumed, then the three evaluation rules collapse.  This is shown in Fig. \ref{dex} too.  T18 shows the equivalence between the \ref{lewis} rule and the opt rule, and T19 shows
the equivalence between the \ref{lewis} rule and the max rule.

Questions of correspondence between properties and modal axioms are still under investigation. There are two extra complications. First, a completeness result is available for the strongest system {\bf G} only: it is complete with respect to the class of models in which $\succeq$ is transitive and total (and hence reflexive).  Second, only two properties seem to have an import, but the matching between them and the axioms is not one-to-one: one property validates more than one axiom, sometimes in combination with the other property. This is shown in Table \ref{sound lewis}. The left column gives the axiom. The right column shows the property (or pair of properties) required to validate this one. 

\begin{figure}[h]
\centering
\includegraphics[scale=0.55]{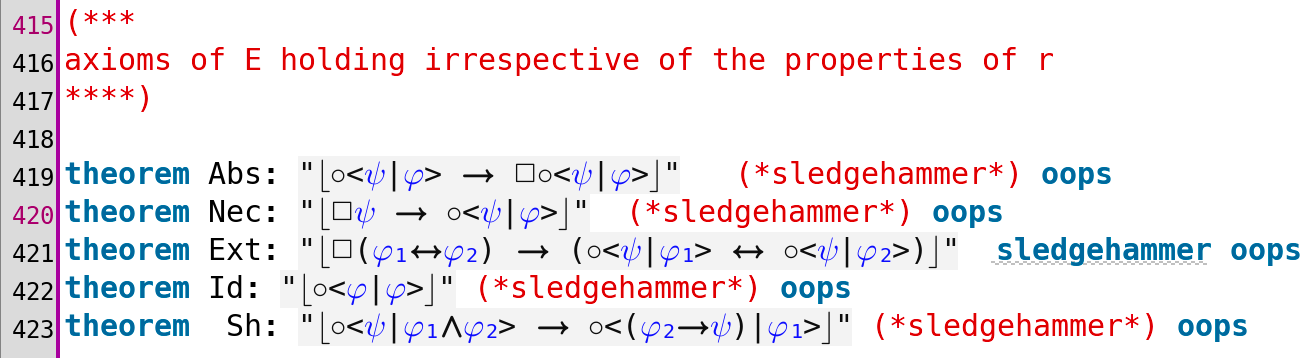}
\caption{Axioms independent of the properties (\ref{lewis} rule)}\label{e-lew}
\end{figure}
\begin{table}[ht]
	\centering
	\begin{tabular}{l|l}
		Axiom of {\bf G} & Property (or pair of properties)  of $\succeq$\\ \hline\hline
		(\ref{cod}) & totality\\
		(\ref{spohn}) & transitivity\\
		(\ref{cok}) & transitivity and totality\\
		(\ref{cm}) & transitivity and totality\\ \hline\hline
	\end{tabular} 
\caption{Axioms and properties under the \ref{lewis} rule$-$from \cite{ddl:P21}} \label{sound lewis}
\end{table}

\begin{figure}[h]
\centering
\includegraphics[scale=0.215]{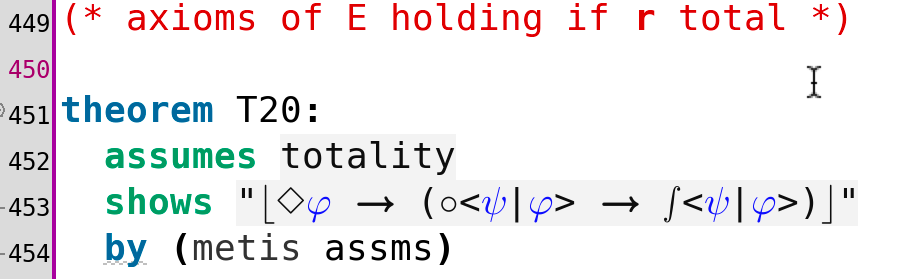}
\caption{Totality alone (\ref{lewis} rule)}\label{total-lew}
\end{figure}

\begin{figure}[h]
\centering
\includegraphics[scale=0.515]{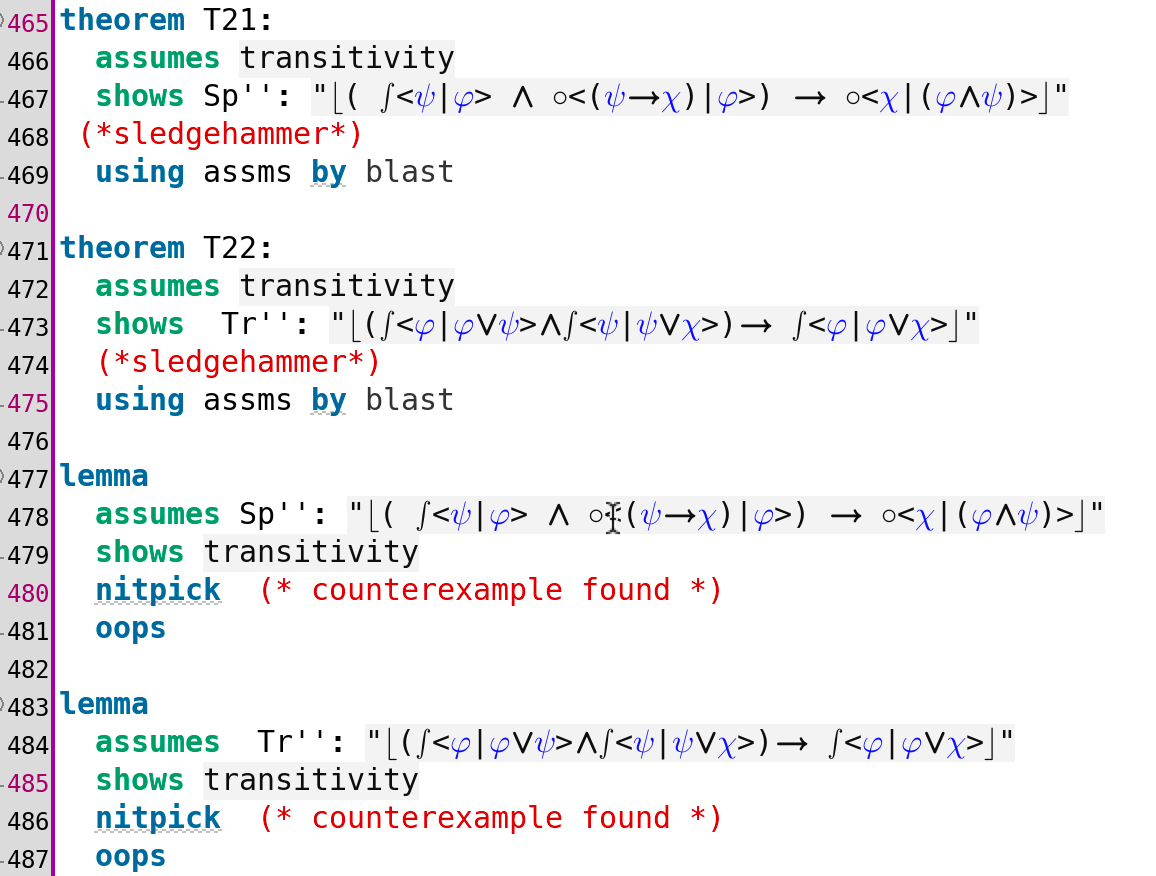}
\caption{Transitivity alone (\ref{lewis} rule, ct'd)}\label{total-lew2}
\end{figure}

\begin{figure}[h]
\centering
\includegraphics[scale=0.25]{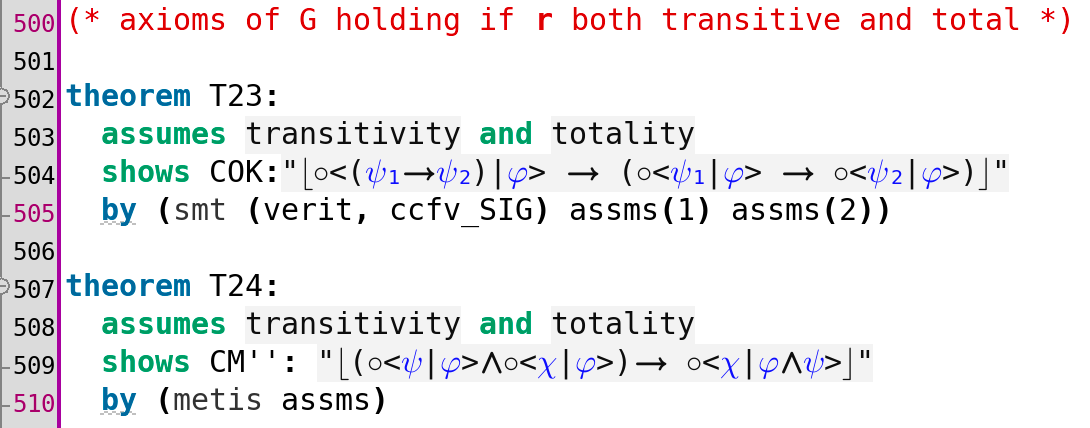}
\caption{Transitivity and totality together (\ref{lewis} rule)}\label{lew21}
\end{figure}


In Fig. \ref{e-lew},  \emph{Sledgehammer} shows the validity of the axioms of {\bf E} holding independently of the properties assumed of the betterness relation. In Figs. \ref{total-lew} and  \ref{total-lew2}, \emph{Sledgehammer} confirms that the \ref{cod} axiom and the \ref{spohn} axiom call for totality and transitivity, respectively. Similarly, Fig. \ref{lew21} shows that \ref{cok} and \ref{cm} call for \emph{both} transitivity and totality. In all these cases,  \emph{Sledgehammer} fails to show the converse implications.

\subsection{Assessment} \label{corres}

\begin{figure}[h]
\centering
\includegraphics[scale=0.5]{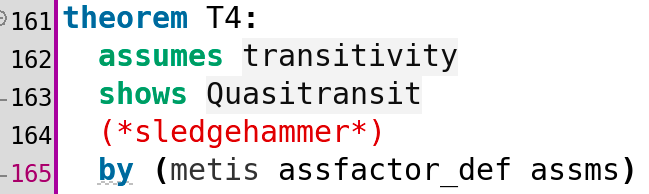}
\caption{Proving quasi-transitivity}\label{qxAI}
\end{figure}

\newnew{
As mentioned in Sect. \ref{logikey}, we found that the tools are very responsive in automatically proving (\textit{Sledgehammer}), disproving (\textit{Nitpick}), or showing consistency by providing a model (\textit{Nitpick}). 
Not only an answer is returned, but also a justification for this answer.  However, concerning this, we found that \textit{Nitpick} fares better than \textit{Sledgehammer}. 
When \textit{Sledgehammer} has proved a theorem successfully, the list of the definitions, axioms and lemmas to be used is returned. But the derivation itself is not given.\footnote{\newnew{We tried to use Isar style reconstruction of proofs with \emph{Sledgehammer}, but without much success for our examples.}}  In principle one could look into the detailed proof output file of the external provers called by \emph{Sledgehammer}, but this requires technical expertise. A simple example is given in Fig. \ref{qxAI}. Line 161 tells us that quasi-transitivity follows from transitivity by assumption (``assms") and using the definition of an asymmetric factor (``assfactor"). This is indeed how this would be shown by hand. However, a detailed argument is not given. 
By contrast, \textit{Nitpick} always gives the full details of the model justifying its answer, and this one was always correct in our experiments. An example of such a model is given in Fig. \ref{fig-cm}, which we will discuss in a moment. 

A comparative study with native provers, similar to the one in \cite{sb23}, must be left as a topic for future work. We are not aware of a similar automation of the systems studied in this paper using other methods.\footnote{\newnew{The hyper-sequent system for {\bf E} defined in \cite{DBLP:conf/prima/CiabattoniOP22} comes with a method for extracting a counter-model from a failed proof search. It holds the promise to provide a theorem prover against which we could evaluate the one described in this paper.}} A comparison with a prover for a related system (e.g. {KLML}ean 2.0, due to \cite{10.1007/978-3-540-73099-6_19}) would already be beneficial.


The entire Isabelle document (``DDLcube.thy") is verified by Isabelle2023 in 1m50s
on an Apple M1 with 8 GB of memory. During this time, Isabelle/HOL solves 82 problems, whereby demonstrating good responsiveness.  
It takes 15s for \emph{Nitpick} to find 40 (counter-)models, and 1m35s for 
\emph{Sledgehammer} to show the validity of  34 formulas and verify 6 implication relations between properties of the betterness relation. Additionally, we consistently observe accurate proofs and models, contrasting with the inherently error-prone nature of the traditional pen-and-paper method.
The assurance of accuracy is an added benefit of the known faithfulness of the embedding, 
Th.~\ref{faith}, a distinctive hallmark of the method. 




}

We end with a critical assessment of the findings on correspondence. The situation for conditional (deontic) logic is still slightly different from the one for traditional modal logic. In the latter setting, the full equivalence between the property of the relation and the modal formula is verified by automated means. In the former setting only the direction ``property $\Rightarrow$ axiom" is verified by automated means. To be more precise, what is verified is the fact that, if the property holds, then the axiom holds. What is not confirmed is the converse statement,  that if the axiom holds then the property holds.
This asymmetry deserves to be discussed. 

First, it is usual to distinguish between validity on a frame and validity in a model based on a frame. A frame is a pair $\mathcal{F}=(W,R)$, with $W$ a set of worlds and $R$ the accessibility relation. A model based on $\mathcal{F}=(W,R)$
is the triplet  $\mathcal{M}=(W,R,v)$ obtained by adding a specific valuation $v$, or a specific assignment of truth-values to atomic formulas at worlds. For a formula to be valid on a frame  $\mathcal{F}$, it must be valid in all models  based on $\mathcal{F}$. In other words, it must be true for every assignment to the atomic formulas.  We have worked at the level of models. But in so-called correspondence theory \emph{à la} Salqhvist-van Benthem, the link between formulas and properties is in general studied at the level of frames themselves. One shows that $\mathcal{F}$ meets a given condition iff formula $A$  is valid on  $\mathcal{F}$. In a recent extension of the semantical embedding approach for public announcement logic PAL, cf. \cite{J58}, an explicit dependency on the concrete evaluation domain has been modeled. 
The question as
to whether this idea can be extended to support a notion of frame-validity  is a topic for future research.




Second, the most we got is that a given property  is a sufficient condition for the validity of the axiom, but not a necessary one. For instance, to disprove the implication ``\ref{cm} $\Rightarrow$ m-smoothness" under the max rule (Fig.~\ref{lim-smoot}), \emph{Nitpick} exhibits a model in which \ref{cm} holds and m-smoothness \new{is} falsified. This model is shown in Fig.~\ref{fig-cm}. The corresponding preferential model is also shown below. Smoothness is falsified, because it contains an infinite loop of strict betterness, making the smoothness condition fail for, e.g., $\varphi\vee\neg\varphi$. But \ref{cm} (vacuously) holds, because the two conjuncts appearing in the antecedent of the axiom are both false. Indeed, $i_3$ is a maximal $\varphi$-world, and it falsifies $\psi$ and $\chi$.  This shows that m-smoothness is not a  necessary condition for the axiom to hold.

\begin{figure}
\centering
\begin{subfigure}{0.5\textwidth}
\centering
\includegraphics[width = \textwidth]{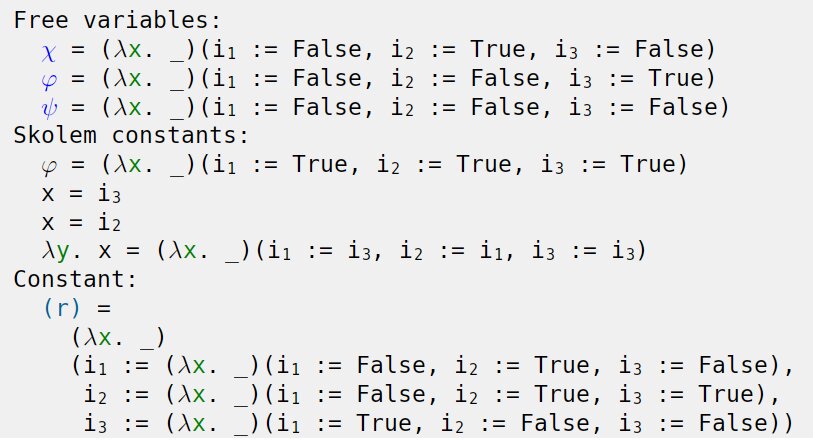}
\caption{Model for HOL}
\label{fig:left}
\end{subfigure}\vspace{1cm}
\begin{subfigure}{0.49\textwidth}
\centering
\includegraphics[scale=0.35]{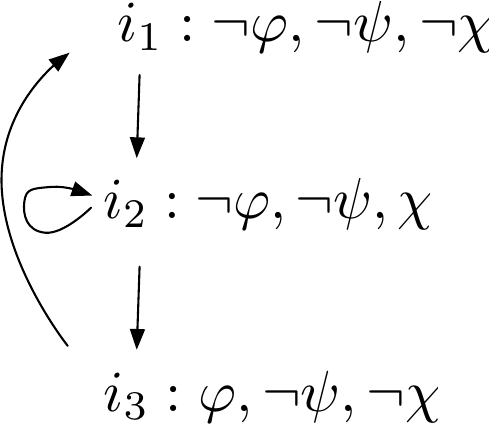}
\caption{Preferential model.  An arrow from $i_1$ to $i_2$ means  $i_1\succeq i_2$.  No arrow
from  $i_2$ to $i_1$ means  $i_2\not\succeq i_1$}
\label{fig:right}
\end{subfigure}
\caption{A non-smooth model validating \ref{cm} (max)}\label{fig-cm}
\end{figure}
It is interesting to remark that \emph{Nitpick} always presents a finite standard model. We leave it as a topic for future research to investigate if the crucial distinction between standard and non-standard models for HOL which, according to \cite{And02}, sheds so much light on the mysteries associated with the incompleteness theorems,  has a bearing on the issue at hand.


Another open problem concerns the possibility of verifying ``negative" results. 
As shown in Table \ref{results}, under the max rule transitivity alone does not correspond to any axiom. Also under both the max rule and the opt rule neither reflexivity nor totality correspond to an axiom. Finally, under the \ref{lewis} rule the limit assumption has no \new{impact}. All this has been established with pen and paper. It would be worth exploring the question as to whether and how this problem could be tackled in Isabelle/HOL.






\section{Case study: Parfit's repugnant conclusion} \label{rc}

In this section,\footnote{
\newnew{The code snippets are taken from the companion theory files ``mere\_addition\_opt.thy", ``mere\_addition\_max.thy'' and ``mere\_addition\_lewis.thy''.}} 
\newnew{we show how to employ the framework described in the previous sections} 
for the computer-aided assessment of ethical arguments in philosophy. 
Our focus is on analyzing the repugnant conclusion as discussed by  \cite{Parfit1984-PARRAP}.
We provide a computer encoding of his
argument for the repugnant conclusion to make it amenable to formal analysis and computer-assisted experiments.  \newnew{Through the use of Isabelle/HOL, we discuss the plausibility of a solution of the paradox, advocated by \cite{ddl:T87} and others. It involves rejecting the assumption of transitivity of ``better than''.
To put the proposed solution to the test, a full-blooded logical characterization of ``better than" is needed. This one is given by the framework described in the previous sections. Following the tradition in deontic and conditional logic (refer to, for example, \cite{ddl:L73}), we make a distinction between ``better than" as a relation on formulas and as a relation on possible worlds, with the latter being instrumental in defining the logic of the former.}

This section is organized into three subsections. Subsection \ref{par} describes Parfit's argument for the repugnant conclusion. For readability's sake, we focus on a simplified version of the paradox, called the mere addition paradox. Subsection \ref{exp} documents the experiments we have run. Subsection \ref{sum:f} summarizes our findings.

\subsection{The paradox} \label{par}


The repugnant conclusion reads:
	\begin{quote}		
		 ``For any possible population of at least ten billion people, all with a very high quality of life, there must be some much larger imaginable population whose existence, if other things are equal, would be better even though its members have lives that are barely worth living.'' \cite[Ch. 6]{Parfit1984-PARRAP}
			\end{quote}
The target is ``total utilitarianism", according to which 
the best outcome is given by the total of well-being in it.
This view implies that any loss in the quality of lives in a population can be compensated for by a sufficient gain in the quantity of a population. 
Fig.~\ref{rep:conc}  illustrates the repugnant conclusion. 
The blocks correspond to two populations, $A$ and $Z$. The width of each block represents the number of people in the corresponding population, the height represents their quality of life. All the lives in the above diagram have lives worth living. People’s quality of life is much lower in $ Z$ than in $A$ but, since there are many more people in $Z$, there is a greater quantity of welfare in $Z$ as compared to $A$. Consequently, although the people in $A$ lead very good lives and the people in $Z$ have lives only barely worth living, $Z$ is nevertheless better than $A$ according to classical utilitarianism. 
 \bigskip
\begin{center}
 \begin{figure}[h] %
\setlength{\unitlength}{0.3in}
\begin{picture}(0,3)
	\put(2,0){\framebox(0.5,3){$A$}}
	\put(0.5,3.2){{\small Very high quality in life}}
	
		\put(4,0){\framebox(5,0.3){$Z$}}
	\put(8,1){{\small Very low but positive quality in life}}
		\put(9.5,0.5){{\small $Z$ has a lot more people}}
\end{picture}
\caption{Repugnant conclusion}\label{rep:conc}
\end{figure}	
\end{center}
It has been argued by e.g.  \cite{ddl:T87} 
that the repugnant conclusion can be blocked, by just dropping the assumption of the transitivity of ``better than''. This is best explained by considering a smaller version of the paradox, called the mere addition paradox. The repugnant conclusion is generated by iteration of the reasoning underlying the mere addition paradox. 

The mere addition paradox is shown in Fig. \ref{mere:add}. In population $A$, everybody enjoys a very high quality of life. In population $A^{+}$ there is one group of people as large as the group in $A$  and with the same high quality of life. But  $A^{+}$ also contains a number of people with a somewhat lower quality of life. In Parfit’s terminology $A^{+}$ is generated from $A$ by ``mere addition". Population $B$ has the same number of people as $A^{+}$, their lives are worth living and at an average welfare level slightly above the average in  $A^{+}$, but lower than the average in $A$. 
The link with the repugnant conclusion is that by reiterating this structure (scenario $B^{+}$ and $C$, $C^{+}$ etc.), we end up with a population $Z$ in which all lives have a very low positive welfare.
\begin{center}
\begin{figure}[h]\setlength{\unitlength}{0.3in}  \centering
\begin{picture}(8,4)
	\put(2,3.5){$A$}
		\put(4.5,3.5){$A^{^+}$}
		\put(7,3.5){$B$}
	
	\put(2,0){\framebox(0.5,3){$\;$}}
	\put(4,0){\framebox(0.5,3){$\;$}}
	\put(4.7,0){\framebox(0.5,1.5){$\;$}}

	 \linethickness{0.09mm}
  \multiput(6.7,2.25)(0.4,0){5}{\line(1,0){0.3}} 
  
\thinlines
		\put(4.5,-0.35){{\small + people}}
		\put(7,0){\framebox(1,2.5){$\:$}}
	\put(8.5,2.5){{\small average in $A^{+}$ }}
\end{picture}
\caption{Mere addition paradox}\label{mere:add}
\end{figure}	
\end{center}

The following statements are all plausible: 
\begin{itemize}
    \item [(P0)] $A$ is strictly better than $B$: $A > B$. Otherwise, in the original scenario, by parity of reasoning or consistency (scenario $B^{+}$ and $C$, $C^{+}$, ...) one would have to deny that $A$ is better than $Z$.
   \item [(P1)]     $A^{+}$ is at least as good as $A$: 
     $A^{+}  \geq  A$. Justification: $A^{+}$ is not worse than (and hence  at least as good as) $A$; the addition of lives worth living (the + people) cannot make a population worse. 
           \item [(P2)]
     $B$ is strictly better than $A^{+}$: 
     $B>A^{+}$. Justification: $A^{+}$ and $B$ have the same size; the average welfare level in $B$ is slightly above the average in $A^{+}$, and the distribution is uniform across members. So 
      $B$ is better in regard to both average welfare (and thus also total welfare) and equality. \end{itemize}


The relations $\geq$ and $>$  appearing in (P0)-(P2) apply to propositional formulas. It is usual to take $\varphi\geq\psi$ as a shorthand of  $P(\varphi/\varphi\vee\psi)$, and $\varphi>\psi$ as a shorthand of $\varphi \geq \psi$ and  $\psi\not\geq\varphi$. (Cf. \cite{ddl:L73}). This is shown in Table \ref{pref:form}.


\medskip
\begin{center}
\begin{table}[h]
\centering
\begin{tabular}{c|c|c}
Definiendum     & Definiens & Reading  \\ \hline
$\varphi \geq \psi$     &  $ P (\varphi/\varphi\vee\psi) $ & $\varphi$  permitted, if $\varphi\vee \psi$ \\
$\varphi > \psi$     &  $ P (\varphi/\varphi\vee\psi) \wedge$ $\bigcirc (\neg\psi/\varphi\vee\psi)$ & $\varphi$ permitted and $\psi$ forbidden, if $\varphi\vee \psi$\\
\end{tabular}
\vspace{0.2cm}
\caption{Preference on formulas}\label{pref:form}
\end{table}
\end{center}



\vspace{-1cm}
Fig. \ref{fig:mereadd} shows the encoding of P0-P2 in terms of obligation, an obligation statement being evaluated using the opt rule: 
\begin{figure}[ht]
    \centering  
    \includegraphics[scale=0.5]{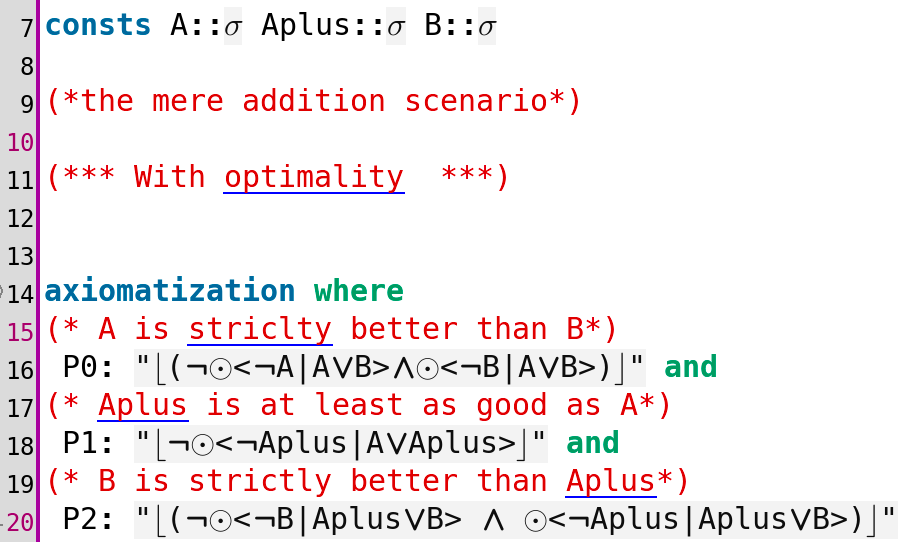}
    \caption{Encoding of the mere addition scenario (optimality)}
    \label{fig:mereadd}
\end{figure}

\begin{figure}[h]
    \centering
    \includegraphics[scale=0.5]{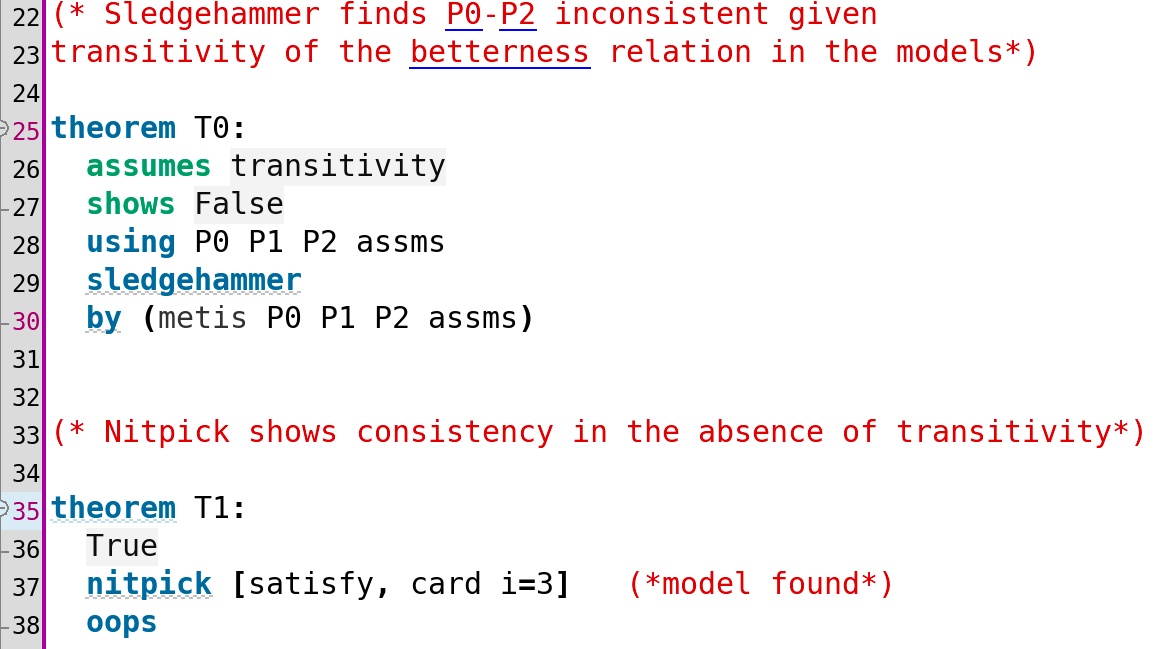}
    \caption{Sample queries on (P0)-(P2)}
    \label{fig:mereadd:queries}
\end{figure}

\subsection{Computer-assisted experiments}\label{exp}

Fig. \ref{fig:mereadd:queries} shows some sample queries run on the scenario under the opt rule. On l.~26, the assumption of the transitivity of the betterness relation (on possible worlds) is introduced.  \emph{Sledgehammer} shows the inconsistency of (P0)-(P2). On l. 35, the assumption of transitivity is dropped. \emph{Nitpick} confirms the satisfiability of (P0)-(P2). The model generated by \emph{Nitpick} is shown in Fig.~\ref{fig:mereadd:henkin}. \newpage
\begin{figure}[h]
    \centering
    \includegraphics[scale=0.55]{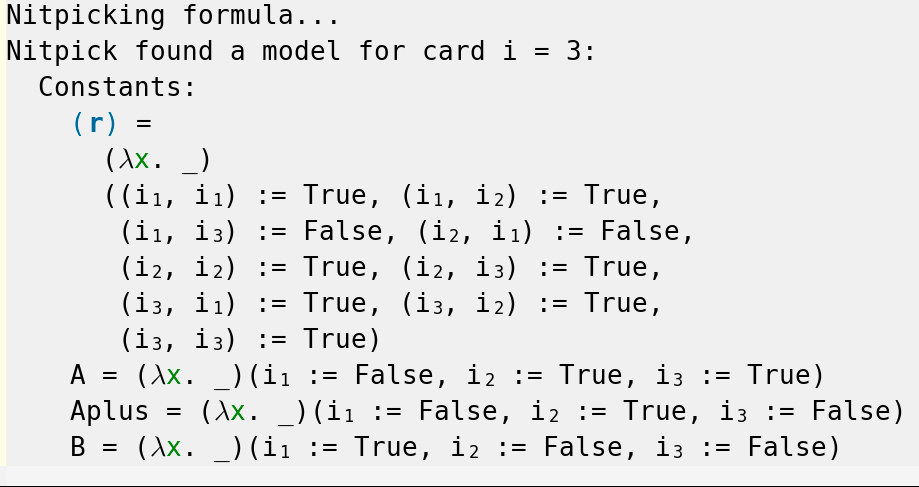}
    \caption{A non-transitive model satisfying (P0)-(P2)}
    \label{fig:mereadd:henkin}
\end{figure}

\begin{figure}[h]
    \centering
    \includegraphics[scale=0.5]{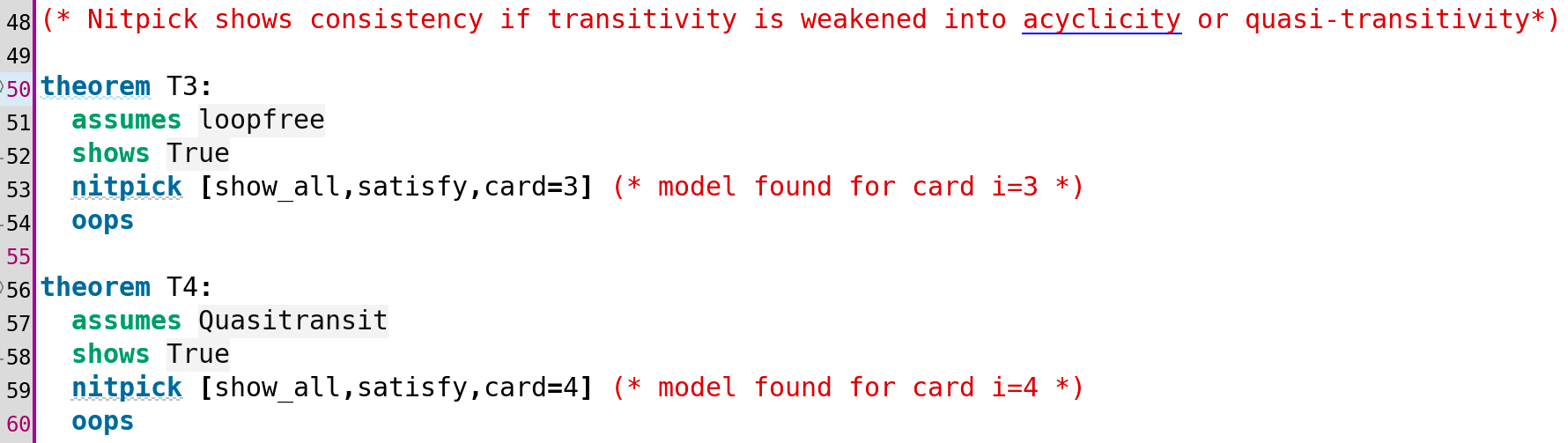}
    \caption{A-cyclicity and quasi-transitivity}
    \label{fig:pic/mere-acci-acycl}
\end{figure}\begin{figure}[h]
    \centering
    \includegraphics[scale=0.5]{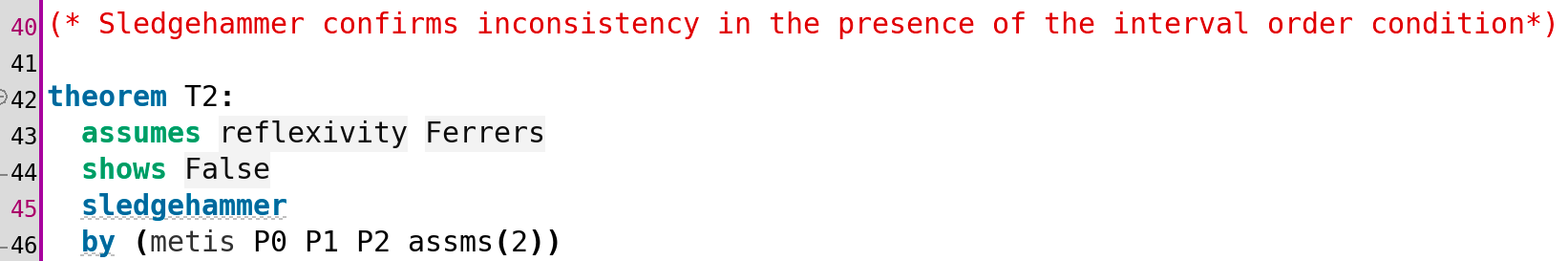}
    \caption{Interval order}
    \label{fig:pic/io-mere}
\end{figure}
 \emph{Nitpick} can also confirm that the mere addition paradox is avoided if transitivity is not rejected wholesale, but weakened into a-cyclicity or quasi-transitivity. This point has in general been overlooked in the literature. On the other hand, \emph{Sledgehammer} can verify that this solution does not work for the interval order condition, which represents another candidate weakening of transitivity. The verifications are shown in Figs. \ref{fig:pic/mere-acci-acycl} and  \ref{fig:pic/io-mere}.

\newnew{
We run the same queries under the max rule and the \ref{lewis} rule.  The findings are summarized in Table \ref{findings}. The left-most column shows the constraint put on the betterness relation. The other columns show what happens when varying the truth conditions for the conditional obligation operator. The symbol $\checkmark$ indicates that the sentences formalizing the scenario have been confirmed to be consistent, and the symbol  \ding{56} indicates they have been confirmed to be inconsistent. 
\begin{center}
\begin{table}[h]
\centering
\begin{tabular}{c|ccc}
\diagbox[trim=r, height=2\line]{\\ \\Property}{Truth conditions} &Opt&Max&\ref{lewis}\\ \hline
None & $\checkmark$ &  $\checkmark$&  $\checkmark$  \\
Transitivity + totality & \ding{56}& \ding{56}   & \ding{56} \\
Transitivity & \ding{56}& \ding{56} (if model finite)  & \ding{56} \\
Interval order   & \ding{56}& \ding{56} & \ding{56} \\
Quasi-transitivity   &  $\checkmark$ &  \ding{56}  (if model finite)  &  $\checkmark$   \\
Acyclicity& $\checkmark$ &  $\checkmark$&  $\checkmark$  \\
\hline
\end{tabular} \caption{Mere addition paradox (overview of findings)} \label{findings}
\end{table}
\end{center}
We verified manually the counter-models found by \emph{Nitpick}, and all appeared to be correct. One can see that changing the truth conditions for the conditional does not have any effect, except for transitivity and quasi-transitivity under the max rule. A few comments are in order. 

The formulas involved in the scenario are labeled as PP0-PP2. First of all, \emph{Sledgehammer} shows that, if $\succeq$ is transitive and total, then PP0-PP2 are inconsistent. This is shown in Fig. \ref{fig:pic/addtrtot}. 
\noindent
This is to be contrasted with the situation under the opt rule and the \ref{lewis} rule. But this apparent asymmetry has an explanation. 
When Temkin refers to the betterness relation, he has mostly in mind the relation $\geq$  (on formulas). He does not disentangle the relation $\geq$  (on formulas) from the relation $\succeq$ (on worlds), the latter being used to define the truth conditions of the former. Nor does he specify if ``best'' is to be understood in terms of maximality or optimality. 
The properties of $\succeq$ and $\geq$ may not coincide depending on the definition of ``best".
Thus, under the opt rule, if $\succeq$ is transitive, then $\geq$  is transitive $-$see T16 in Fig. \ref{transi-opt}. 
Under the max rule, it is only if $\succeq$ is \emph{both} transitive and total that $\geq$  is transitive$-$see T12 in Fig. \ref{totalplustr2}.\footnote{\newnew{A pen and paper verification may be found in \cite[Obs. 2.11 and 2.12]{ddl:P21}}.}

\begin{figure}[h]
    \centering
    \includegraphics[scale=0.25]{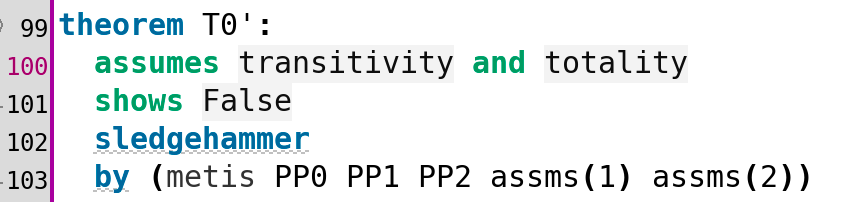}
    \caption{Inconsistency under transitivity and totality}
    \label{fig:pic/addtrtot}
\end{figure}
However, this does not explain everything. 
The above might suggest an alternative solution to the mere addition paradox. Perhaps one could just keep the transitivity of $\succeq$  but reject the totality of $\succeq$, while concurrently defining ``best" in terms of maximality. This would be in keeping with the conventional approach in rational choice theory: maximality is often deemed more suitable than optimality, because it keeps the possibility of incomparability open$-$\cite{ddl:S97}. 
But what happens with transitivity or quasi-transitivity of $\succeq$ alone (third and fifth row, starting from the top) suggests that the solution lies elsewhere. \emph{Sledgehammer} shows that, given (quasi-)transitivity, the formulas PP0-PP2 are inconsistent,  assuming a finite model
   of cardinality (up to) seven (if we provide the exact dependencies). This is shown in 
  Fig. \ref{fig:pic/addtr}.\footnote{\newnew{
For higher cardinalities 
   \emph{Sledgehammer} and  \emph{Nitpick} return a timeout.  Timeouts can be explicitly specified by a parameter, like [timeout=100]; the default is 60s. }}

\begin{figure}[h]
    \centering
    \includegraphics[scale=0.5]{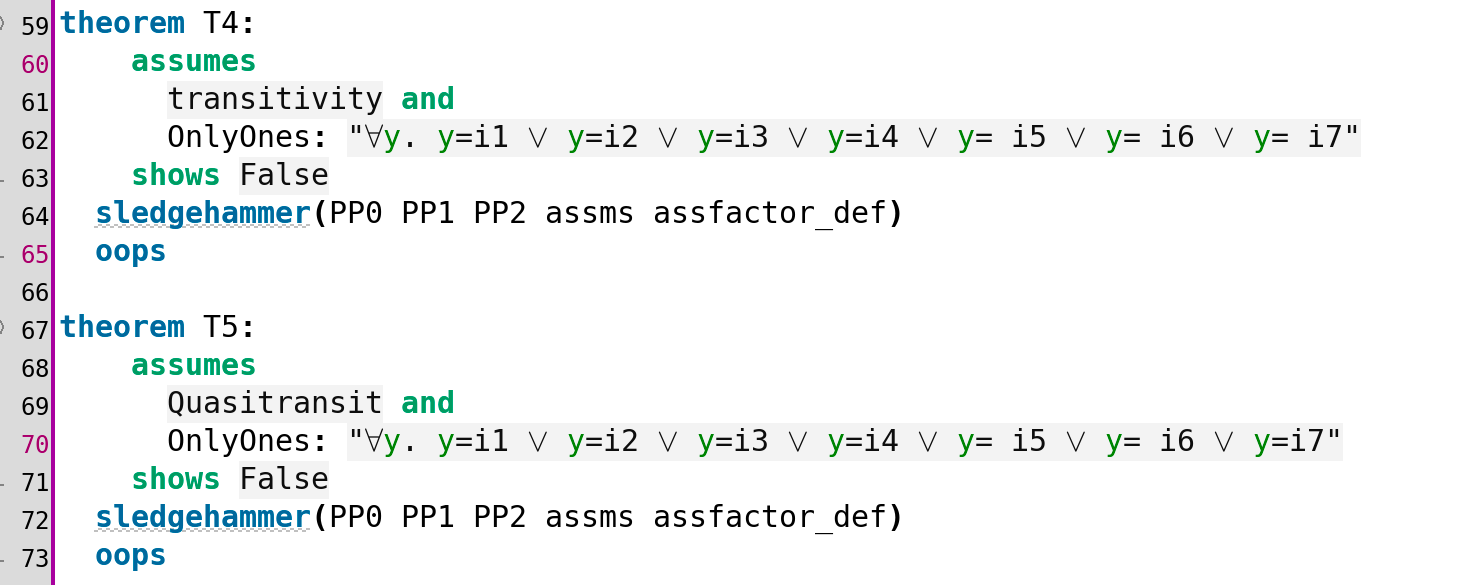}
    \caption{Inconsistency under (quasi-)transitivity and finiteness}
    \label{fig:pic/addtr}
\end{figure}
Thus, under the max rule, (quasi-)transitivity makes PP0-PP2 inconsistent if the set of possible worlds is assumed to be finite$-$an assumption that might appear overly limiting, if not arbitrary. This observation calls into question the idea that transitivity is the sole cause of the paradox.





The above fact has remained unnoticed until now.  We have not been able to establish it in full generality  (i.e., regardless of the model's fixed cardinality) without resorting to manual calculations. 
This is Proposition \ref{finiteness} below.   
It says that, in the presence of (quasi-)transitivity, a necessary condition for PP0-PP2 to be simultaneously satisfiable, is that the model contains 
an infinite increasing $\succ$-chain of $\neg A\wedge\neg A^{+}\!\!\wedge B$-worlds (all distinct). 
\begin{proposition}\label{finiteness} Suppose $\succeq$ is quasi-transitive (resp. transitive). Assume the following  formulas  are satisfied in a world in a model $M=(W,\succeq, v)$:
\begin{flalign}
&P(A/
A\vee B) \label{eq0}\\
&P(A^{+}\!\!/
A\vee A^{+}) \label{eq1}\\
&\bigcirc (\neg A^{+}\!\!/
A^{+}\!\!\vee B) \label{eq2}\\
&\bigcirc (\neg B/ A\vee B) \label{eq3}\\
& P(B/A^{+}\vee B) \label{eq4}
\end{flalign}
Then $W$ contains an infinite increasing $\succ$-chain of $\neg A\wedge\neg A^{+}\!\!\wedge B$-worlds (all distinct). 
\end{proposition}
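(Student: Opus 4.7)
The plan is to translate each deontic assumption into a statement about max-best worlds under the max rule, pick a ``ceiling'' world that cannot be matched by any $A$- or $A^+$-world, and then grow a chain of $B$-worlds strictly above that ceiling. Throughout, write $\Phi$ for $\neg A\wedge\neg A^+\wedge B$. Unpacking: $(\ref{eq1})$ yields some $b_{0}\in\mx(A\vee A^+)$ with $b_{0}\models A^+$; $(\ref{eq2})$ says every world in $\mx(A^+\vee B)$ falsifies $A^+$; $(\ref{eq3})$ says every world in $\mx(A\vee B)$ falsifies $B$. The premises $(\ref{eq0})$ and $(\ref{eq4})$ guarantee that the corresponding max-sets are inhabited but, somewhat surprisingly, will not be further used.

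For the base of the chain, $b_{0}\models A^+$ gives $b_{0}\models A^+\vee B$, and by $(\ref{eq2})$ it is not in $\mx(A^+\vee B)$; hence some $w_{1}\succ b_{0}$ satisfies $A^+\vee B$. If $w_{1}$ satisfied $A$ or $A^+$, then $w_{1}\models A\vee A^+$ and $w_{1}\succ b_{0}$, contradicting $b_{0}\in\mx(A\vee A^+)$. So $w_{1}\models\neg A\wedge\neg A^+$, which together with $w_{1}\models A^+\vee B$ forces $w_{1}\models\Phi$. For the induction step, assume $w_{n}\models\Phi$ with $n\geq 1$. Then $w_{n}\models A\vee B$; by $(\ref{eq3})$ it is not in $\mx(A\vee B)$, so some $w_{n+1}\succ w_{n}$ satisfies $A\vee B$. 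Quasi-transitivity, i.e.\ the transitivity of $\succ$, applied along the chain already built yields $w_{n+1}\succ b_{0}$, so the same ``ceiling'' reasoning rules out $w_{n+1}\models A$ and $w_{n+1}\models A^+$, giving $w_{n+1}\models\Phi$.

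For distinctness, if $w_{i}=w_{j}$ with $i<j$, repeated use of quasi-transitivity gives $w_{j}\succ w_{i}=w_{j}$, violating the irreflexivity of $\succ$. Since transitivity implies quasi-transitivity (Fig.~\ref{wt}), the proof covers both hypotheses in the statement. The only delicate point, and what I expect to be the main obstacle, is keeping the chain inside $\Phi$ at every step; the key device is to use $b_{0}$ as a permanent anchor that every $w_{n}$ must strictly dominate, which is precisely what transitivity of $\succ$ purchases and why the argument breaks down if we only assume, say, Suzumura consistency or acyclicity.
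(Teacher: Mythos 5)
Your proof is correct and follows essentially the same route as the paper's: both extract an anchor $b_0\in\mx(A\vee A^{+})$ satisfying $A^{+}$ from (\ref{eq1}), use (\ref{eq2}) once to step into the $\neg A\wedge\neg A^{+}\wedge B$ region, and then iterate (\ref{eq3}) together with quasi-transitivity of $\succ$ to keep every new world strictly above the anchor, with distinctness secured by irreflexivity/acyclicity. Your explicit induction and the observation that (\ref{eq0}) and (\ref{eq4}) are not needed match the paper's proof exactly.
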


\begin{proof} I focus on the case where $\succeq$ is quasi-transitive. Recall that by definition 
$\succ$ is irreflexive and asymmetric, and that quasi-transitivity entails acyclicity (see Fig. \ref{wt}). 
Assume that formulas (\ref{eq1})-(\ref{eq2})-(\ref{eq3}) are satisfied; the argument primarily revolves around these. 
Equations (\ref{eq0}) and (\ref{eq4}) can also be assumed true without leading to a contradiction.

By (\ref{eq1}), there is some $a_1$
such that $a_1\in\max (A\vee A^{+}\!)$ and $a_1\models A^{+}$. Hence, $a_1\models A^{+}\!\vee B$. By (\ref{eq2}),   $a_1\not\in\max (A^{+}\!\vee B)$, so there is some $a_2\succ a_1$ s.t.  $a_2\models A^{+}\vee B$. By irreflexivity, $a_1$ and  $a_2$ are distinct. Clearly,  $a_2\models \neg A\wedge\neg A^{+}$. So $a_2\models B$, and hence  by  (\ref{eq3}), 
 $a_2\not\in\max (A\vee B)$. It follows that there is some $a_3\succ a_2$ s.t.  $a_3\models A\vee B$.  By irreflexivity and asymmetry, $a_3$ is other than  $a_2$ and $a_1$. By quasi-transitivity, 
 $a_3\succ a_1$. Since  $a_1\in\max (A\vee A^{+}\!)$, $a_3\models \neg A\wedge\neg A^{+}$. Hence  $a_3\models B$, and hence  by  (\ref{eq3}) again, $a_3\not\in\max (A\vee B)$,
 and  so there is some  $a_4$ s.t. $a_4\models A\vee B$ and $a_4\succ a_3$.  By acyclicity, $a_4$ is other than  $a_1$, $a_2$, $a_3$ and $a_4$. By quasi-transitivity, $a_4\succ a_1$, and so as before $a_4\models \neg A\wedge\neg A^{+}$. Reiterating the above argument indefinitely, one gets an infinite increasing $\succ$-chain of $\neg A\wedge\neg A^{+}\!\!\wedge B$-worlds, starting with $a_2$. 

 The argument goes through if $\succeq$ is required to be transitive. This is because transitivity implies quasi-transitivity (see Fig. \ref{wt}). 
\end{proof}
\begin{remark}
    
Note that Prop. \ref{finiteness} does not apply to to the interval order condition (even if 
 this one implies quasi-transitivity as well). 
 This is because (\ref{eq4}) cannot simultaneously hold in a model where the interval order condition is satisfied.
 This can easily be verified. By (\ref{eq4}), there is some
 $b_1\in\max (A^{+}\!\vee B)$ with $b_1\models B$. By (\ref{eq3}),  $b_1\not\in\max (A\vee B)$. So there is some $b_2\succ b_1$ with $b_2\models A\vee B$. Clearly,
 $b_2\models A\wedge\neg B\wedge \neg A^{+}\!$. We have $a_1\not\succeq a_2$ and 
 $b_1\not\succeq b_2$. By Ferrers, $a_1\not\succeq b_2$ or  
 $b_1\not\succeq a_2$. By totality, $b_2\succ a_1$ or  
 $a_2\succ b_1$. The first contradicts the fact that 
$a_1\in\max (A\vee A^{+}\!)$, while the second contradicts the fact that $b_1\in\max (A^{+}\!\vee B)$. This is shown in Fig. \ref{fmp}, where the two cases are indicated by a ``or". 
\end{remark} 

 \begin{figure}[h]
    \centering
    \includegraphics[scale=0.35]{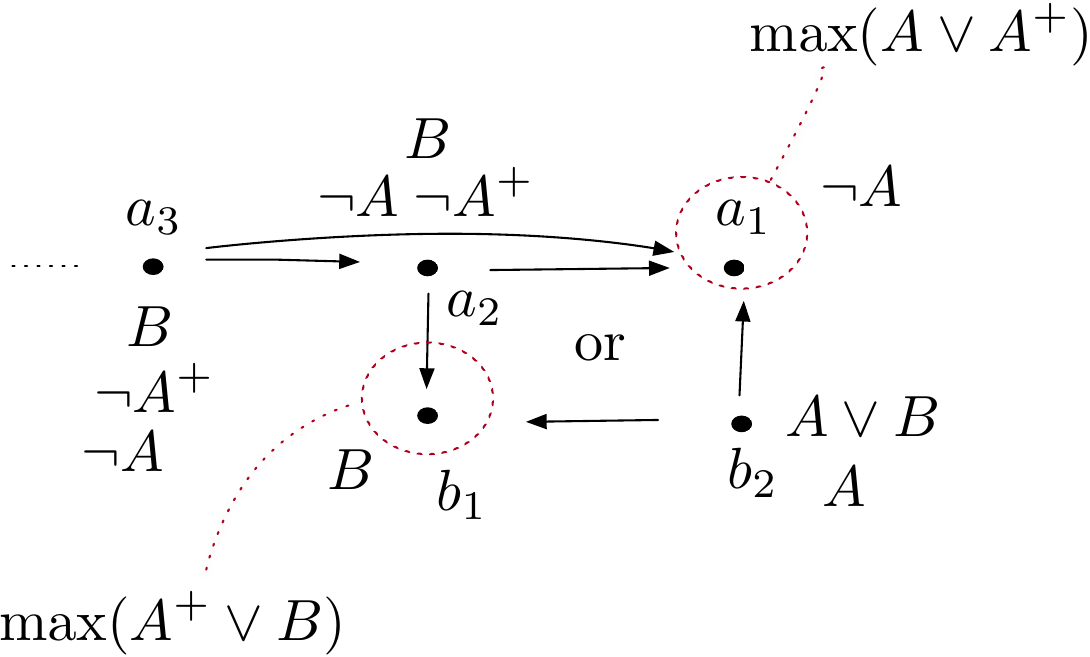}
    \caption{Adding (\ref{eq4})}
    \label{fmp}
\end{figure}

The following spin-off result is new to the literature. We recall that the finite model property (f.m.p.) is said to hold w.r.t. a given class $C$ of models, if any formula $\varphi$ that is satisfiable in class $C$ is satisfiable in a finite model in $C$.  
\begin{corollary}[f.m.p.]  Under the max rule, the finite model property fails w.r.t. the following classes of models whose relation $\succeq$ meets the property as indicated:
\begin{itemize}
\item $\succeq$ is quasi-transitive
\item $\succeq$ is transitive
\item $\succeq$ is an interval order
\end{itemize}
\end{corollary}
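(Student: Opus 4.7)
My plan is to establish f.m.p. failure uniformly by exhibiting a single witness formula, $\Phi := (\ref{eq1}) \wedge (\ref{eq2}) \wedge (\ref{eq3})$, that is satisfiable in some model of each class but unsatisfiable in every finite model of the class. The ``no finite model'' direction comes for free from the proof of Proposition~\ref{finiteness}: inspecting that proof, one sees it uses nothing beyond (\ref{eq1})--(\ref{eq3}) and quasi-transitivity to generate the infinite strictly $\succ$-ascending chain of distinct worlds, which no finite universe can accommodate (quasi-transitivity, being equivalent to transitivity of $\succ$ plus its built-in asymmetry, rules out finite ascending chains of length exceeding $|W|$). Since interval order implies quasi-transitivity (Fig.~\ref{wt}), the same conclusion carries over to all three target classes.

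For the ``satisfiable in some model of the class'' direction, I would verify that one infinite model does the job simultaneously for all three classes: take the linear chain $W = \{a_i : i \geq 1\}$ with $a_i \succeq a_j$ iff $i \geq j$, with $a_1 \models A^+$ only and $a_i \models B$ only for $i \geq 2$. As a total linear order, $\succeq$ is reflexive and transitive, hence quasi-transitive; and it is Ferrers, so being also reflexive it is an interval order. Thus this single structure sits in all three classes. A direct inspection yields $\max(A \vee A^+) = \{a_1\}$ with $a_1 \models A^+$, giving (\ref{eq1}); while $\max(A^+ \vee B) = \max(A \vee B) = \emptyset$ because the chain has no $\succ$-maximum, which makes (\ref{eq2}) and (\ref{eq3}) vacuously true.

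The conceptually nontrivial point$-$and the main obstacle I would want to clarify for the reader$-$is that the apparently natural witness, the full mere-addition conjunction PP0--PP2, cannot be reused for the interval-order class: by the Remark following Proposition~\ref{finiteness}, formula (\ref{eq4}) is already incompatible with (\ref{eq1})--(\ref{eq3}) under the interval-order condition, so using all five formulas would disprove f.m.p. vacuously (by unsatisfiability in any interval-order model, finite or not). Stripping the scenario down to the core subset (\ref{eq1})--(\ref{eq3}) that the proof of Proposition~\ref{finiteness} really uses simultaneously sidesteps this incompatibility, fits within each of the three classes, and admits the very simple linear-chain model above, yielding one uniform witness for the entire corollary.
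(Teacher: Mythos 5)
Your proposal is correct and follows essentially the same route as the paper: the same witness $(\ref{eq1})\wedge(\ref{eq2})\wedge(\ref{eq3})$, the same appeal to Proposition~\ref{finiteness} (whose proof indeed uses only these three formulas and quasi-transitivity), and the same reduction of the transitive and interval-order cases to the quasi-transitive one via Fig.~\ref{wt}. The one place you go beyond the paper is in explicitly constructing the infinite linear-chain model witnessing satisfiability in all three classes—a step the paper only gestures at with a parenthetical remark—which is a worthwhile addition rather than a different approach.
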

\begin{proof} The second and third claims follow from the first, because quasi-transitivity follows from  transitivity, and also from the interval order condition (see Fig. \ref{wt}). To prove the first claim, set $\varphi: = (\ref{eq1})\wedge (\ref{eq2})\wedge (\ref{eq3})$, and use Prop. \ref{finiteness} above. (The interval order condition will be met in the described model as long as each world is assumed to be at least as good as itself.) 
\end{proof}

 \begin{figure}[h]
    \centering
    \includegraphics[scale=0.53]{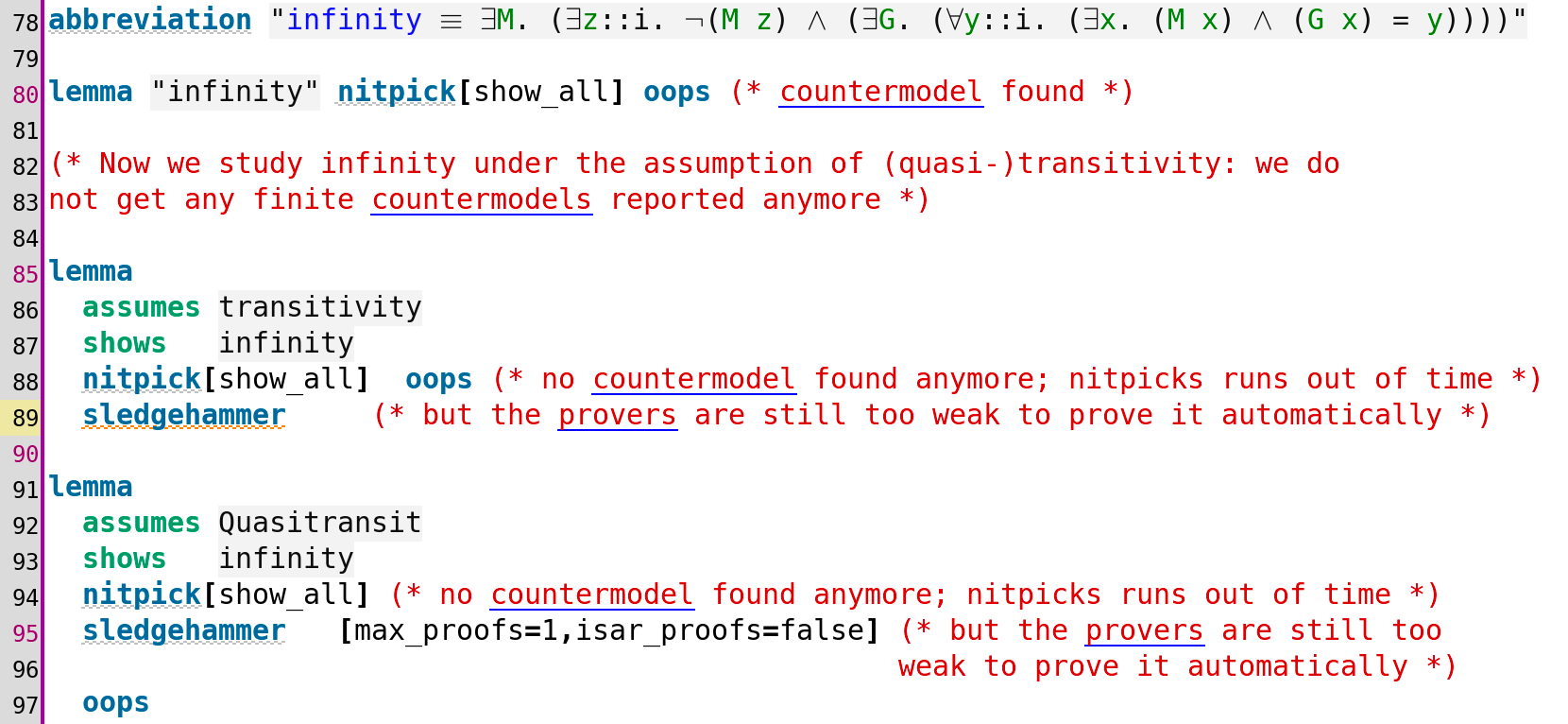}
    \caption{Proving infinity}
    \label{fig:pic/inf}
\end{figure}

In HOL one can define the axiom of infinity (for type $i$) by the second-order formula:  
   \begin{flalign*}
 \mbox{infinity} \equiv \exists M.\;\;\big( \exists z::i. \;\neg (Mz) \;\wedge\; (\exists G. \, (\forall y::i. \,(\exists x. \,(Mx) \wedge (Gx = y))))\big)
   \end{flalign*}
The \emph{definiens} says that there is a surjective mapping $G$ from  domain $i$ to a proper subset $M$ of domain $i$.  Testing whether infinity holds, \emph{Nitpick} gives us a counter-model to infinity that is a model of PP0-PP2.  If the same query is run under the assumption of (quasi-)transitivity, we do not get any (finite)
counter-model reported anymore. However the provers are still not strong enough to prove infinity.  This is shown in Fig.~\ref{fig:pic/inf}.
  
\subsection{Summary}\label{sum:f}

Distinguishing between ``better than'' as a relation between formulas and as a relation on possible worlds,  our formalisation offers two new insights on the scenario. Below, it is understood that the structural properties are those of the second relation. 
\begin{itemize}
\item One can choose not to take a stand on the truth conditions for the conditional, but weaken transitivity rather than reject it wholesale. However,  not all potential weakenings of transitivity prove effective: quasi-transitivity and acyclicity do the job, but not the interval order condition.  This is independent of the choice of the evaluation rule for the conditional. 
\item One could adopt the max rule, keep transitivity, and allow  for the possibility that there are infinite sequences of better and better worlds.
 Ultimately, this solution is questionable, for the reason explained in Sect. \ref{lew}: the max rule faces a deontic explosion problem, if infinite chains are allowed. Nevertheless, the availability of this option is worth a mention.
\end{itemize}

}
\section{Conclusion} \label{conc}

Utilising the LogiKEy methodology and framework we have developed mechanisations of extensions of \AA qvist's preference-based system  {\bf E} for conditional obligation. We have illustrated the use of the resulting tool for (i) meta-logical studies and for (ii) object-level application studies in normative reasoning. Novel contributions, partly contributed by the automated reasoning tools in Isabelle/HOL, include the automated verification of the correspondence between semantic properties and modal axioms, and the formalisation and mechanisation of Parfit's argument for the repugnant conclusion. This one reveals the possibility of a take on the scenario usually under-appreciated in the literature, which consists in weakening transitivity suitably. Future work includes the handling of the full equivalence between properties and formulas, the formalisation of (and comparison with) other solutions to the repugnant conclusion, and the analysis of other variant paradoxes discussed in the literature. 

\section*{Funding}

Dr. X. Parent was funded in whole, or in part, by the Austrian Science Fund (FWF) [M3240 N, ANCoR project (doi: 10.55776/I2982), and 10.55776/I6372, LoDEX project].
For the purpose of open access, the author has applied a CC BY public copyright licence to any Author Accepted Manuscript version arising from this submission.




\bibliography{interactapasample}

\begin{thebibliography}{}

\bibitem[Andrews, 2002]{And02}
Andrews, P. (2002).
\newblock {\em An Introduction to Mathematical Logic and Type Theory}.
\newblock Springer.

\bibitem[{\AA}qvist, 1987]{ddl:A87}
{\AA}qvist, L. (1987).
\newblock {\em An Introduction to Deontic logic and the Theory of Normative
  Systems}.
\newblock Bibliopolis, Naples.

\bibitem[{\AA}qvist, 2002]{ddl:A02}
{\AA}qvist, L. (2002).
\newblock Deontic logic.
\newblock In Gabbay, D. and Guenthner, F., editors, {\em Handbook of
  Philosophical Logic}, volume~8, pages 147--264. Kluwer Academic Publishers,
  Dordrecht, Holland, 2nd edition.
\newblock Originally published in \cite[pp.~605--714]{ddl:GG84}.

\bibitem[Benzm{\"u}ller, 2019]{J41}
Benzm{\"u}ller, C. (2019).
\newblock Universal (meta-)logical reasoning: Recent successes.
\newblock {\em Science of Computer Programming}, 172:48--62.

\bibitem[Benzm{\"u}ller et~al., 2015]{C47}
Benzm{\"u}ller, C., Claus, M., and Sultana, N. (2015).
\newblock Systematic verification of the modal logic cube in {Isabelle/HOL}.
\newblock In Kaliszyk, C. and Paskevich, A., editors, {\em PxTP 2015}, volume
  186, pages 27--41, Berlin, Germany. EPTCS.

\bibitem[Benzm{\"u}ller et~al., 2019]{J45}
Benzm{\"u}ller, C., Farjami, A., and Parent, X. (2019).
\newblock {{\AA}qvist's} dyadic deontic logic {E} in {HOL}.
\newblock {\em Journal of Applied Logics -- IfCoLoG Journal of Logics and their
  Applications (Special Issue: Reasoning for Legal AI)}, 6(5):733--755.

\bibitem[Benzm{\"u}ller et~al., 2012]{J26}
Benzm{\"u}ller, C., Gabbay, D., Genovese, V., and Rispoli, D. (2012).
\newblock Embedding and automating conditional logics in classical higher-order
  logic.
\newblock {\em Annals of Mathematics and Artificial Intelligence},
  66(1-4):257--271.

\bibitem[Benzm{\"u}ller et~al., 2020]{J48}
Benzm{\"u}ller, C., Parent, X., and van~der Torre, L. (2020).
\newblock Designing normative theories for ethical and legal reasoning: Logikey
  framework, methodology, and tool support.
\newblock {\em Artificial Intelligence}, 287:103348.

\bibitem[Benzm\"uller and Reiche, 2022]{J58}
Benzm\"uller, C. and Reiche, S. (2022).
\newblock Automating public announcement logic with relativized common
  knowledge as a fragment of {HOL} in {LogiKEy}.
\newblock {\em Journal of Logic and Computation}, 33(6):1243–1269.

\bibitem[Benzm{\"u}ller and Woltzenlogel~Paleo, 2016]{C55}
Benzm{\"u}ller, C. and Woltzenlogel~Paleo, B. (2016).
\newblock The inconsistency in {G{\"o}del's} ontological argument: A success
  story for {AI} in metaphysics.
\newblock In Kambhampati, S., editor, {\em IJCAI 2016}, volume 1-3, pages
  936--942. AAAI Press.

\bibitem[Blanchette et~al., 2013]{Sledgehammer}
Blanchette, J.~C., B{\"{o}}hme, S., and Paulson, L.~C. (2013).
\newblock Extending {Sledgehammer} with {SMT} solvers.
\newblock {\em Journal of Automated Reasoning}, 51(1):109--128.

\bibitem[Blanchette et~al., 2016]{blanchette2016hammering}
Blanchette, J.~C., Kaliszyk, C., Paulson, L.~C., and Urban, J. (2016).
\newblock Hammering towards {QED}.
\newblock {\em Journal of Formalized Reasoning}, 9(1):101--148.

\bibitem[Blanchette and Nipkow, 2010]{Nitpick}
Blanchette, J.~C. and Nipkow, T. (2010).
\newblock Nitpick: {A} counterexample generator for higher-order logic based on
  a relational model finder.
\newblock In Kaufmann, M. and Paulson, L.~C., editors, {\em ITP 2010}, volume
  6172 of {\em LNCS}, pages 131--146. Springer.

\bibitem[Chellas, 1975]{ddl:ch75}
Chellas, B. (1975).
\newblock Basic conditional logic.
\newblock {\em Journal of Philosophical Logic}, 4(2):pp. 133--153.

\bibitem[Chellas, 1980]{ddl:C80}
Chellas, B. (1980).
\newblock {\em Modal Logic}.
\newblock Cambridge University Press, Cambridge.

\bibitem[Ciabattoni et~al., 2022]{DBLP:conf/prima/CiabattoniOP22}
Ciabattoni, A., Olivetti, N., and Parent, X. (2022).
\newblock Dyadic obligations: Proofs and countermodels via hypersequents.
\newblock In Aydogan, R., Criado, N., Lang, J., S{\'{a}}nchez{-}Anguix, V., and
  Serramia, M., editors, {\em {PRIMA} 2022: Principles and Practice of
  Multi-Agent Systems - 24th International Conference, Valencia, Spain,
  November 16-18, 2022, Proceedings}, volume 13753 of {\em Lecture Notes in
  Computer Science}, pages 54--71. Springer.

\bibitem[Gabbay and Guenthner, 1984]{ddl:GG84}
Gabbay, D. and Guenthner, F., editors (1984).
\newblock {\em Handbook of Philosophical Logic}, volume~II.
\newblock Reidel, Dordrecht, Holland, 1st edition.

\bibitem[Giordano et~al., 2007]{10.1007/978-3-540-73099-6_19}
Giordano, L., Gliozzi, V., and Pozzato, G.~L. (2007).
\newblock {KLML}ean 2.0: A theorem prover for {KLM} logics of nonmonotonic
  reasoning.
\newblock In Olivetti, N., editor, {\em Automated Reasoning with Analytic
  Tableaux and Related Methods}, pages 238--244, Berlin, Heidelberg. Springer.

\bibitem[Goble, 2004]{gob04}
Goble, L. (2004).
\newblock A proposal for dealing with deontic dilemmas.
\newblock In Lomuscio, A. and Nute, D., editors, {\em Deontic Logic in Computer
  Science}, pages 74--113, Berlin, Heidelberg. Springer Berlin Heidelberg.

\bibitem[Goble, 2019]{ddl:lou19}
Goble, L. (2019).
\newblock Axioms for {H}ansson's dyadic deontic logics.
\newblock {\em Filosofiska Notiser}, 6(1):13--61.

\bibitem[Hansson, 1969]{ddl:H69}
Hansson, B. (1969).
\newblock An analysis of some deontic logics.
\newblock {\em No$\hat{\mbox{u}}$s}, 3(4):373--398.

\bibitem[Hughes and Cresswell, 1984]{HC84}
Hughes, G.~E. and Cresswell, M.~J. (1984).
\newblock {\em A Companion to Modal Logic}.
\newblock Methuen, London.

\bibitem[Kraus et~al., 1990]{ddl:KLM90}
Kraus, S., Lehmann, D., and Magidor, M. (1990).
\newblock Nonmonotonic reasoning, preferential models and cumulative logics.
\newblock {\em Artificial Intelligence}, 44(1-2):167--207.

\bibitem[Lewis, 1973]{ddl:L73}
Lewis, D. (1973).
\newblock {\em Counterfactuals}.
\newblock Blackwell, Oxford.

\bibitem[Luce, 1956]{L56}
Luce, R. (1956).
\newblock Semiorders and a theory of utility discrimination.
\newblock {\em Econometrica}, 24:178--191.

\bibitem[Makinson, 1993]{ddl:M93}
Makinson, D. (1993).
\newblock Five faces of minimality.
\newblock {\em Studia Logica}, 52(3):339--379.

\bibitem[Nortmann, 1986]{nor86}
Nortmann, U. (1986).
\newblock Deontische logik: {D}ie variante der lokalen {ä}quivalenz.
\newblock {\em Erkenntnis}, 6(25):275–318.

\bibitem[Parent, 2014]{ddl:parent14}
Parent, X. (2014).
\newblock Maximality vs. optimality in dyadic deontic logic.
\newblock {\em Journal of Philosophical Logic}, 43(6):1101--1128.

\bibitem[Parent, 2015]{ddl:parent15}
Parent, X. (2015).
\newblock Completeness of {{\AA}}qvist's systems {E} and {F}.
\newblock {\em Review of Symbolic Logic}, 8(1):164--177.

\bibitem[Parent, 2021]{ddl:P21}
Parent, X. (2021).
\newblock Preference semantics for dyadic deontic logic: a survey of results.
\newblock In Gabbay, D., Horty, J., Parent, X., van~der Meyden, R., and van~der
  Torre, L., editors, {\em Handbook of Deontic Logic and Normative Systems},
  pages 1--70. College Publications, London. UK.
\newblock Volume 2.

\bibitem[Parent, 2024]{ddl:parent24}
Parent, X. (2024).
\newblock On some weakened forms of transitivity in the logic of conditional
  obligation.
\newblock {\em Journal of Philosophical Logic}, 53(3):721--760.

\bibitem[Parent and Benzmüller, 2024]{CondNormReasHOL-AFP}
Parent, X. and Benzmüller, C. (2024).
\newblock Conditional normative reasoning as a fragment of {HOL}
  ({I}sabelle/{HOL} dataset).
\newblock {\em Archive of Formal Proofs}.
\newblock Formal proof development,
  \url{https://isa-afp.org/entries/CondNormReasHOL.html}.

\bibitem[Parent and van~der Torre, 2021]{PvTT21}
Parent, X. and van~der Torre, L. (2021).
\newblock {\em Introduction to Deontic Logic and Normative Systems}.
\newblock College Publications, London.

\bibitem[Parfit, 1984]{Parfit1984-PARRAP}
Parfit, D. (1984).
\newblock {\em Reasons and Persons}.
\newblock Oxford University Press.

\bibitem[Sahlqvist, 1975]{Sah75}
Sahlqvist, H. (1975).
\newblock Completeness and correspondence in the first and second order
  semantics for modal logic.
\newblock In Kanger, S., editor, {\em Proceedings of the Third Scandinavian
  Logic Symposium}, volume~82 of {\em Studies in Logic and the Foundations of
  Mathematics}, pages 110--143. Elsevier.

\bibitem[Segerberg, 1971]{Segerberg1971}
Segerberg, K. (1971).
\newblock Some logics of commitment and obligation.
\newblock In Hilpinen, R., editor, {\em Deontic Logic: Introductory and
  Systematic Readings}, pages 148--158, Dordrecht. Springer Netherlands.

\bibitem[Sen, 1997]{ddl:S97}
Sen, A. (1997).
\newblock Maximization and the act of choice.
\newblock {\em Econometrica}, 65(4):745--779.

\bibitem[Shoham, 1988]{ddl:S88}
Shoham, Y. (1988).
\newblock {\em Reasoning About Change: Time and Causation from the Standpoint
  of Artificial Intelligence}.
\newblock MIT Press, Cambridge, MA, USA.

\bibitem[Spohn, 1975]{ddl:S75}
Spohn, W. (1975).
\newblock An analysis of {H}ansson's dyadic deontic logic.
\newblock {\em Journal of Philosophical Logic}, 4(2):237--252.

\bibitem[Steen et~al., 2023]{sb23}
Steen, A., Sutcliffe, G., Scholl, T., and Benzm{\"u}ller, C. (2023).
\newblock Solving modal logic problems by translation to higher-order logic.
\newblock In Herzig, A., Luo, J., and Pardo, P., editors, {\em Logic and
  Argumentation}, pages 25--43, Cham. Springer Nature Switzerland.

\bibitem[Temkin, 1987]{ddl:T87}
Temkin, L.~S. (1987).
\newblock Intransitivity and the mere addition paradox.
\newblock {\em Philosophy and Public Affairs}, 16(2):138--187.

\bibitem[van Benthem, 2001]{Ben2001}
van Benthem, J. (2001).
\newblock Correspondence theory.
\newblock In Gabbay, D.~M. and Guenthner, F., editors, {\em Handbook of
  Philosophical Logic}, volume~3, pages 325--408. Springer Netherlands,
  Dordrecht.

\end{thebibliography}
\bibliographystyle{apalike}



\end{document}